\newcommand{\oa}{\overline{\alpha}}
\newcommand{\ua}{\underline{\alpha}}
\newcommand{\ot}{\overline{\theta}}
\newcommand{\ut}{\underline{\theta}}
\newcommand{\os}{\overline{s}}
\newcommand{\td}{\ \text{d}}
\newcommand{\mM}{\mathcal{M}}
\newcommand{\umM}{\underline{\mathcal{M}}}
\begin{document}

\defcitealias{RoSzAER2017}{RS}

\title{Multi-Dimensional Screening: Buyer-Optimal Learning and Informational Robustness}

\author[Deb]{Rahul Deb$^{\between}$}
\address{$^{\between}$Department of Economics, University of Toronto\\\href{mailto:rahul.deb@utoronto.ca}{rahul.deb@utoronto.ca}}
\author[Roesler]{Anne-Katrin Roesler$^{\symking}$ \\ \today}
\address{$^{\symking}$Department of Economics, University of Toronto\\\href{mailto:ak.roesler@utoronto.ca}{ak.roesler@utoronto.ca}}

\thanks{We are grateful for comments from Gabriel Carroll, Andrea Prat, Xianwen Shi, Gabor Virag, anonymous reviewers from the ACM-EC conference, and numerous seminar participants. We thank the Social Sciences and Humanities Research Council for their continued and generous financial support.}

\begin{abstract}
A monopolist seller of multiple goods screens a buyer whose type is initially unknown to both but drawn from a commonly known distribution. The buyer privately learns about his type via a signal. We derive the seller's optimal mechanism in two different information environments. We begin by deriving the buyer-optimal outcome.  Here,  an information designer first selects a signal, and then the seller chooses an optimal mechanism in response; the designer's objective is to maximize consumer surplus. Then, we derive the optimal informationally robust mechanism. In this case, the seller first chooses the mechanism, and then nature picks the signal that minimizes the seller's profits. We derive the relation between both problems and show that the optimal mechanism in both cases takes the form of pure bundling.
\end{abstract}

\maketitle

\section{Introduction}

What is the optimal mechanism that a monopolist should use to sell multiple goods to a single buyer? Despite being a classic economic problem, multi-dimensional screening is notoriously intractable. Even if the seller has just two goods and the buyer's values are additive, independent, and identically distributed, the optimal mechanism is hard to characterize generally. In this paper, we study a general version (with arbitrarily many goods and non-additive values) of this problem but with the novel feature of buyer learning. As it turns out, introducing this new feature makes the model tractable and in certain environments---including the one with independent and additive values---makes pure bundling an optimal mechanism. 

The buyer in our model initially has an unknown type $(\theta_1,\dots,\theta_n)$ that is drawn from a commonly known exchangeable\footnote{This is an assumption of symmetry that requires every permutation of the type vector to have the same joint distribution. It does not rule out positive or negative correlations.} distribution, where each $\theta_i\in [\ut,\ot]\subset \mathbb{R}_+$. The buyer's type determines his value $\kappa_b\sum_{i\in b} \theta_i$ for any bundle $b\subseteq \{1\dots,n\}$ of goods where $\kappa_b\geq 0$ is a non-negative constant. We assume a weak free-disposal property, which requires that the value of the grand bundle (that is, the bundle of all $n$ goods) is greater than any other bundle for every type. This class of value functions allows for goods to be complements or substitutes and, importantly, includes additive values ($\kappa_b=1$ for all bundles $b$) as a special case. The buyer learns about his type via a signal. Upon privately observing the signal realization, the buyer forms a posterior estimate of his value for different bundles. 

\medskip

Our aim is to derive the seller's optimal mechanism under two different informational environments. We first characterize the \textit{buyer-optimal outcome}: this is the signal and the corresponding optimal mechanism for the seller that generate the \textit{maximal consumer surplus}. Specifically,  an information designer\footnote{As we discuss below, this could be interpreted either as a theoretical benchmark or alternatively we could think of the designer as a regulator having the best interest of consumers in mind.} first publicly picks the signal (the signal realization remains private to the buyer) to maximize consumer surplus anticipating that the seller will choose an optimal mechanism in response. We show that the buyer-optimal outcome is generated by a signal that makes ``pure bundling'' (selling the grand bundle at a given price) an optimal mechanism for the seller. Additionally, we show that the seller's profit is minimized: there is no other signal and corresponding optimal mechanism that yield a lower profit. Thus, we show that the seller's profit in the buyer-optimal outcome is the solution to a min-max problem where an adversarial nature picks a signal with the aim of minimizing the profits of a seller who best-responds with an optimal mechanism.

\medskip

We then derive the \textit{optimal informationally robust mechanism} for the seller: this is an optimal mechanism for a seller who does not know how the buyer learns and who evaluates profits according to a worst-case criterion. Here, the timing is reversed: the seller first chooses the mechanism, following which nature picks the signal to minimize the seller's profit. Therefore, in this case, the seller's profit from the optimal informationally robust mechanism is the solution to a max-min problem. Once again, we show that pure bundling is optimal for the seller, but, in this case, she randomizes over the price for the grand bundle. Moreover, we derive this result by showing that the seller's profit, in this case, is exactly equal to her profit from the buyer-optimal outcome; that is, the optimal value of the objective function in the max-min and the min-max problems coincide.

\medskip

In our view, the solutions to both problems are individually economically interesting and have distinct implications. At a high level though, both demonstrate different important properties of pure bundling. The buyer-optimal outcome is a natural theoretical benchmark. The seller of a single good always finds it optimal to screen by simply posting a price. By contrast, optimal multidimensional screening can, and frequently does, involve complex menus and randomization even when values are additive and each $\theta_i$ is independently and identically distributed. Such elaborate screening helps sellers maximize profits, but the effect on the consumer is unclear. For instance, complex screening might lead to Pareto improvements where both the seller and the buyer are better off because the efficiency of trade increases. The buyer-optimal outcome is a natural benchmark to study the tradeoff between mechanism complexity and the efficiency of trade because the seller best responds to the most advantageous information structure for the buyer. Here, the optimal mechanism takes the very simple form of pure bundling, and we show that trade is efficient.

The selling practices of multi-product retailers are scrutinized by regulators who specifically express concerns about and pursue litigation against practices like tying and bundling by large firms.\footnote{Often, the concern is that such practices are used to prevent the entry of competitors. Importantly, however, a selling practice is not illegal if it maximizes profits even if it deters entry as a side effect. Instead, regulators typically scrutinize firms that incur deliberate losses to maintain market power.} Under the buyer-optimal signal, not only does pure bundling not cause consumer harm, it leads to the highest possible consumer surplus and efficient trade. This suggests that the information available to buyers is an important factor that should determine whether or not bundling needs to be scrutinized. Of course, this also raises the question of whether and, if so, which advertising practices should be regulated in the interest of consumers (see \citealp{bagwell2007} and \citealp{woodcock2017} for a recent case in favor of greater regulation).

Conversely, the optimal informationally robust mechanism provides a positive explanation for why we should expect to observe pure bundling in practice. Despite having historical data from different markets, sellers are unlikely to have very precise estimates of a buyer's value distribution. In particular, it is impossible for a seller to predict what information the buyer has or will acquire in any particular period. Our results show that pure bundling (albeit with a random price) provides the highest revenue guarantee. This is perhaps one reason why, in practice, we do not observe very complex screening that depends on fine details of the type distribution (as is possible in multidimensional screening). Instead, pure bundling is the common way that digital goods such as streaming services are sold. This setting is a good fit for the model: sellers such as Netflix and Spotify have considerable market power, and disposal is free.

\medskip

Before moving on to the related literature, we provide some high-level intuition for the main insight driving our results. The tractability of our model, given its generality and the simple form of the seller's optimal mechanisms might come across as surprising to some readers. After all, we add buyer learning to the already complex (and unsolved at this level of generality) screening problem. Learning itself can also be significantly more complex than its one-dimensional counterpart. This is because, in this multidimensional environment, the set of possible signals is very rich and does not have a simple characterization akin to the one-dimensional case (where the prior type distribution is a mean-preserving spread of the distribution of posterior estimates that can arise from any signal). Our central insight is that it is precisely the buyer learning that makes the problem tractable. Specifically, we identify a signal that describes the buyer's information in the buyer-optimal outcome. For \textit{this} particular signal, it is possible to derive the seller optimal mechanism. Moreover, this signal has the property that it minimizes the seller's revenue when she chooses the optimal informationally robust mechanism and plays an important role in its derivation.

An important feature of our environment is that generically there are informative signals such that the posterior estimates of each $\theta_i$ have maximal positive correlation across $i$. This can be done even when the prior distribution is such that different dimensions of the type vector are distributed independently or are negatively correlated. As an example, suppose the buyer observed a signal (about the sum of his type vector) of the form $\theta_1+\cdots+\theta_n +\varepsilon$ where $\varepsilon$ is some noise. Maximal positive correlation in the distribution of the posterior estimate of the type vector then follows from the fact that the prior distribution is exchangeable. 

In addition to generating such correlation, the signal in the buyer-optimal outcome adds further noise that determines the shape of the distribution of the posterior estimate of each $\theta_i$. It might seem counterintuitive to some that the signal in the buyer-optimal outcome contains so much noise. After all, imperfect information about his type affects the buyer's ability to make good purchasing choices (in an ex-post sense) when faced with a given mechanism. However, the signal also determines the seller's optimal mechanism and, therefore, the share of surplus that the seller can appropriate. Specifically, the correlation in the signal limits the seller's ability to screen across dimensions. Further, the induced shape of the distribution of the posterior type estimate results in efficiency and gives the buyer the maximal surplus share possible (by making it optimal for the seller to offer a pure bundling mechanism at a price that is favorable for the buyer). From a technical perspective, we show that such signals allow us to map our multi-dimensional screening problem onto its one-dimensional counterpart, which in turn allows us to leverage existing results from the literature.

\subsection*{Related Literature}

This paper lies at the intersection of two different literatures. The first literature examines the classic question of how a monopolist should jointly sell multiple goods. Despite being a mature literature (dating back till at least \citet{adams1976}), the complexity of the problem is such that there are surprisingly few general insights even for the special case of additive values. A seminal result is due to \citet{mcafee1989} who show that when values for each good are independent, selling goods individually (separate sales) is never optimal for the monopolist. 

In general, the optimal mechanism can be extremely complex even when values are independent and additive. \citet{pavlov2011} shows that optimal screening can involve randomization when values are identically and uniformly distributed. \citet{DaDeTzEcma2017} show that the optimal mechanism for two goods features an infinite menu of lotteries when the values are drawn independently from the beta distribution. In fact, the seller might get a negligible fraction of the optimal revenue if she is restricted to using ``simple mechanisms'' like pure bundling or separate sales \citep{hart2019}. Unlike these results, we show that, even under general exchangeable prior type distributions, simple pure bundling is optimal in two different environments with buyer learning.

Perhaps the closest paper in this literature is \citet{carroll2017}. He considers a seller who knows the marginal distribution of the buyer's value for each good but not the joint distribution. The monopolist chooses a mechanism that maximizes the worst-case profit computed over all joint distributions which have the given marginals. He shows that separate sales is seller optimal for this criterion. In addition to being a distinct economic problem, our setting is fundamentally different: the buyer's posterior value distribution must be obtained by Bayesian updating from the signal. In particular, we cannot introduce arbitrary correlations across goods, and the marginal posterior distribution for the value of each good can differ from the prior.

Additionally, unlike \citet{carroll2017}, we allow for non-additive values. There is substantially less work that analyzes these environments. The recent survey of \citet{armstrong2016} describes a strand of this literature that does not aim to derive the optimal mechanism but instead characterize conditions under which the seller can profit from offering bundle discounts. A notable exception is \citet{HaHaRES2020} who characterize environments where pure bundling is seller optimal, and we employ one of their results in the proof of our main result. Specifically, we show that, while pure bundling is not an optimal mechanism for the prior type distribution, it emerges as optimal due to buyer learning.

\medskip

This paper is also related to the growing literature on information design: \citet{BeMoJEL2019} and \citet{KaARE2019} are recent surveys. Within this literature, we are most closely related to the recent work studying how the information environment affects the selling mechanism, efficiency, and the resulting surplus division in bilateral trade settings.\footnote{Similar ideas can also be found in the literature on information acquisition and disclosure in mechanism design settings such as \citet{PeEcma2000}, \citet{BeVaEcma2002} or \citet{ShiGEB2012}. In contrast to the information design literature, these papers usually consider a restricted domain of feasible information structures.} While most of this work considers different information environments to this paper, the key distinction is that we study multi-dimensional screening. \citet{BBMAER2015} study a standard single-good monopoly pricing problem and analyze which buyer-seller surplus pairs are achievable when the seller (instead of the buyer) receives additional information which she can use to price discriminate. \citet{RaRoSzWP2020} study an environment where buyer-learning is unobservable but costly. Their main result shows that there is a distinction between free and arbitrarily cheap learning. 

The closest related papers are \citet{RoSzAER2017} and \citet{DuEcma2018}; they respectively analyze the one-dimensional versions of the two problems we study. \citet{RoSzAER2017} derive the buyer-optimal outcome for a single good, and their main insight is to show that, even if information is free, the buyer prefers not to perfectly learn his value for the product. \citet{DuEcma2018} derives the optimal informationally robust mechanism (a random posted price) for a single good\footnote{He additionally constructs an informationally robust auction to sell a common-value good which has the property that, as the number of bidders gets large, its revenue guarantee converges to the full surplus.} and uncovers the relation to the buyer-optimal outcome. The richness of the multi-dimensional screening environment that we consider opens the door to questions that cannot be addressed in the one-dimensional context. Namely, our main contribution is to derive the \textit{qualitative} properties of the seller's optimal mechanism (it takes the form of pure bundling) in both information environments; for the seller of a single good, it is always just a posted price (either deterministic or random).

\section{The Model}\label{sec:model}

We consider a mechanism design problem with one buyer and one seller, the latter of whom has one unit of each of $n\geq 2$ goods for sale. We denote the set of goods by $N=\{1, \dots, n\}$. We assume that the seller's cost of producing these goods is 0.

\medskip

\noindent \textbf{Type Space:} The buyer has a \textit{type} $\theta=(\theta_1,\dots,\theta_n)$ that lies in a set $\Theta=\left[\theta_{\ell},\theta_h\right]^n$ (endowed with the Borel $\sigma$-algebra $\mathscr{F}$) with $\theta_h>\theta_{\ell}\geq 0$. The type is initially unknown to both the buyer and seller, and is drawn from a commonly known (cumulative) distribution $F$.\footnote{We abuse notation and interchangeably refer to $F$ as a cumulative distribution (henceforth, cdf) and a probability measure. The meaning will be clear from the argument (element vs. set).} We assume $F$ has a positive density for all $\theta\in \Theta$.

Additionally, we assume that $F$ is \textit{exchangeable}: for any permutation $\sigma:N\to N$, the joint distribution of $(\theta_1,\dots,\theta_n)$ is the same as the joint distribution of $(\theta_{\sigma(1)},\dots,\theta_{\sigma(n)})$ (both of which are $F$).\footnote{Formally, for any $X\in\mathscr{F}$, we have $F(X)=F(X_{\sigma})$ where $X_{\sigma}=\{(\theta_{\sigma(1)},\dots,\theta_{\sigma(n)})\,|\, (\theta_1,\dots,\theta_n)\in X\}$.} 

Exchangeability requires the marginal distribution of each $\theta_i$ to be the same and is clearly satisfied when each dimension of the type, $\theta_i$, is independent and identically distributed (henceforth iid). Exchangeability allows both positive and negative correlations between dimensions of the type vector, although it does imply a lower bound on the correlation coefficient. This bound is increasing in $n$ but, for the most widely examined case of two goods, exchangeability imposes no restriction on the degree of correlation.

\medskip

Given a type $\theta$, we use $\overline{\theta}\in \overline{\Theta}:= [n\theta_{\ell},n\theta_h]$ to denote the sum $\overline{\theta}:=\theta_1+\cdots+\theta_n$ and $\overline{F}$ denotes the distribution of the sum $\overline{\theta}$ induced by the type distribution $F$.

\medskip

\noindent \textbf{Value Function:} Given a type $\theta$, the buyer's value for a bundle $b\subseteq N$ is
\begin{equation}\label{eq: value function}
u(\theta,b)=\kappa_b \sum_{i\in b} \theta_i,
\end{equation}
where $\kappa_b\geq 0$ and we normalize the constant for the \textit{grand bundle} $N$ to $\kappa_N=1.$ Because of this normalization, we also refer to the sum $\overline{\theta}$ as the \textit{grand bundle value}.

We assume that the buyer's value of not receiving a good is $u(\theta,\emptyset)= 0$ and that $u(\theta,N)\geq u(\theta,b)$ for all $b\subseteq N$ and all $\theta\in \Theta$. The latter is a \textit{weak free-disposal} property and ensures that the greatest surplus is generated by trading the grand-bundle. In terms of the $\kappa_b$-s, this requires that $\kappa_b \leq 1 + \frac{N-|b|}{|b|}\frac{\theta_{\ell}}{\theta_h}$. Note that we do not require $\kappa_{b'} \geq \kappa_b$ for proper subsets $b\subset b'\subset N $. Indeed, when $\theta_{\ell}>0$, we can have $\kappa_b> \kappa_N$.  We assume that preferences are quasilinear and players are risk-neutral expected utility maximizers.

This framework generalizes \textit{additive values} ($\kappa_b=1$ for all $b\subseteq N$) and allows for goods to be either complements or substitutes\footnote{For example, our value function can capture the case of add-on items}. This generality is not merely cosmetic and, as we will argue, has implications for the form that the seller's optimal mechanism takes. \citet{geng2005} also study multidimensional screening with such a value function, but they require that $\kappa_b$ is decreasing in the number of goods in the bundle $b$.

\medskip

\noindent \textbf{Maximal Total Surplus:} We use $\overline{\mu}:=\int_{\Theta} u(\theta,N) dF(\theta)=\int_{\overline{\Theta}} \overline{\theta} d\overline{F}(\overline{\theta})$ to denote the \textit{maximal surplus} that can be achieved from trading. Note that the weak free-disposal assumption implies that the maximal surplus is achieved by trading the grand bundle with probability one.

\medskip

\noindent \textbf{Signals:} The buyer learns about his type via a \textit{signal}. Given the linearity of our model and risk-neutrality of players, without loss, we will restrict attention to \textit{unbiased} signals $(S,G_{S\times \Theta})$. The set of signal realizations $S= \Theta$ is just the type space. $G_{S\times \Theta}\in \Delta(S\times\Theta)$ is a joint distribution over $S\times \Theta$ such that the marginal distribution of $G_{S\times \Theta}$ over $\Theta$ is $F$. We denote the marginal distribution of $G_{S\times \Theta}$ over the set of signal realizations $S$ by $ G$. 

The buyer learns about his type by observing a signal realization $s\in S$. We assume the \textit{posterior estimate} of the type is just the signal realization $s=(s_1\dots,s_n)$ itself (hence, the ``unbiased'' terminology) so $$s=\mathbb{E}_{G_{S\times \Theta}}[\theta\,|\,s]$$ for all $s$ that lie in the support of $ G$. We will refer to both the joint distribution $G_{S\times \Theta}$ and the marginal distribution $ G$ as signals since we can always convert one to the other. The buyer privately observes the signal realization.

Restricting attention to the above class of signals is without loss because the value function is linear in the type and therefore both the buyer and the seller only care about the posterior estimate of the type. To see this, suppose instead that signal realizations are drawn from an arbitrary set $S'$ and $G_{S'\times \Theta}\in \Delta(S'\times\Theta)$ is a joint distribution that has marginal distribution $F$ over $\Theta$.\footnote{We are implicitly assuming that $S'$ is a Polish space endowed with the Borel $\sigma$-algebra. This is a technical assumption which is necessary to guarantee that the conditional distributions are well defined.} If the buyer observes a signal realization $s'$, his value is determined by
$$\mathbb{E}[u(\theta,b)\,|\,s']=u(\mathbb{E}_{G_{S'\times \Theta}}[\theta\,|\,s'],b) \text{ for all } b\subseteq N.$$
So we can just transform any $(S',G_{S'\times \Theta})$ into the above unbiased form by relabeling $s'$ to $s=\mathbb{E}_{G_{S'\times \Theta}}[\theta\,|\,s']$ with $ G$ being the distribution of $\mathbb{E}_{G_{S'\times \Theta}}[\theta\,|\,s']$.



\medskip

We denote the \textit{set of signals}, by $$\mathcal{G}:=\{ G \in \Delta(S)\;|\;  G  \text{ is the marginal distribution over $S$ induced by some signal } (S,G_{S\times \Theta})\}.$$ Note that this is the set of possible distributions over posterior estimates.

\medskip

\noindent \textbf{Mechanism:} The seller chooses a mechanism. Formally, a \textit{mechanism} $\mM=(M,q,t)$ consists of a set of messages $M$, a (possibly random) allocation $q:M\to \Delta(2^N)$ and a transfer $t:M \to \mathbb{R}$. The allocation determines the likelihood of receiving the various bundles and of not being allocated any good; we sometimes use $q(m,b)$ to denote the probability that the buyer is allocated bundle $b$ when he reports message $m\in M$. 

If the buyer with posterior estimate $s$ reports $m\in M$, his expected utility is
$$\mathbb{E}_{q(m)}[u(s,b)]-t(m),$$
where the expectation is taken with respect to the random allocation.  
Because of this structure of payoffs, it is without loss to restrict the seller to deterministic transfers.

We further assume that every mechanism is such that $$\max_{m\in M}\left[\,\mathbb{E}_{q(m)}[u(s,b)]-t(m)\,\right]\geq 0,$$ for all $s\in S$. This is an individual rationality requirement. It ensures that the buyer is not forced to participate in a mechanism that gives him negative utility. Implicitly, we are also assuming that a solution to the above maximization problem exists; that is, every mechanism $\mM$ has the property that, for every $s$, there is a message that maximizes the buyer's utility.

\medskip

The buyer chooses a reporting strategy $\sigma:S\to \Delta(M)$ that maximizes his utility. We say that his strategy $\sigma$ is a \textit{best response} if his expected utility satisfies
\begin{equation}\label{eq:IC}
	U(s,\mM):=\mathbb{E}_{\sigma(s)} \left[\mathbb{E}_{q(m)}[u(s,b)]-t(m)\right]\geq \mathbb{E}_{\sigma'(s)} \left[\mathbb{E}_{q(m)}[u(s,b)]-t(m)\right]
\end{equation}
for all $s\in S$ and all other strategies $\sigma'$. For a mechanism $\mM$, $U(s,\mM)$ denotes the buyer's \textit{utility from best responding} and the \textit{set of best responses} is denoted by $\Sigma(\mM)$.

Given a signal $ G$, a mechanism $\mM$ and a best response $\sigma$, the seller's profit is given by
$$\Pi( G,\mM,\sigma):=\mathbb{E}_{ G} \left[\mathbb{E}_{\sigma(s)} \left[t(m)\right]\right].$$
The outer expectation is taken with respect to distribution over signal realizations and the inner with respect to the buyer's strategy.

\medskip

\noindent \textbf{Pure Bundling and Separate Sales:} We will refer to two special classes of mechanisms repeatedly. The first is a \textit{pure bundling mechanism} at price $\overline{p}$ which we denote by $\mM^{PB}_{\overline{p}}=(M^{PB},q^{PB}_{\overline{p}},t^{PB}_{\overline{p}})$. This is the mechanism in which, whenever the buyer purchases, he is only allowed to purchase the grand bundle $N$ at a price $\overline{p}$. Formally, we can implement this mechanism with a message space $M^{PB}=[n\theta_{\ell},n\theta_h]$ and an allocation and transfer given by

\begin{equation}\label{eq:pure_bundling}
	q^{PB}_{\overline{p}}(m,b)=\left\{\begin{array}{cl}1 & \text{ if } m\geq \overline{p} \text{ and } b=N, \\ 0 & \text{ otherwise,} \end{array}\right.\;\; \text{ and }\;\; t^{PB}_{\overline{p}}(s)=\left\{\begin{array}{cl}\overline{p}  & \text{ if } m\geq \overline{p}, \\ 0 & \text{ otherwise.} \end{array}\right. \tag{PB}
\end{equation}
In words, the buyer reports his value $m$ for the grand bundle and is allocated the grand bundle at a price of $\overline{p}$ if the report is higher than the price. Clearly, it is a best response for the buyer to truthfully report his value.

\medskip

The second is a \textit{separate sales mechanism} at prices $p=(p_1,\dots,p_n)$ which we denote by $\mM^{Sep}_p=(M^{Sep},q^{Sep}_p,t^{Sep}_p)$. Here, the seller offers a price $p_i$ for each individual good and the buyer can choose whichever bundle he likes and just pay the total price. Formally, we can implement such a mechanism with a message space $M^{Sep}=2^N$ and an allocation, transfer given by
\begin{equation}\label{eq:separate_sales}
	q^{Sep}_{p}(m,b)=\left\{\begin{array}{cl}1 & \text{ if }  m=b, \\ 0 & \text{ otherwise,} \end{array}\right. \;\;\text{ and } \;\; t^{Sep}_{p}(m)=\left\{\begin{array}{cl}\sum_{i\in m} p_i  & \text{ if } m\neq \emptyset, \\ 0 & \text{ if } m= \emptyset. \end{array}\right. \tag{Sep}
\end{equation}

\medskip
%
%

\medskip

Our aim is to derive the qualitative properties of the seller's optimal mechanism under two different information environments and to relate the solutions of each. We informally describe both here in words, the formal descriptions are in \cref{sec:buyer_opt_out} and \cref{sec:robust} respectively.

\medskip

We first derive the buyer-optimal outcome. This is a signal for the buyer and an optimal mechanism for the seller (in response to this signal) that maximize consumer surplus. The timing here is that an information designer first publicly chooses the signal. That is, the signal structure is observed by the buyer and the seller, the signal realization is private to the buyer. The seller then chooses an optimal mechanism in response.

\medskip

We then derive the seller's optimal informationally robust mechanism. This is a mechanism that maximizes the seller's profit against the worst possible signal realization. In this case, the timing is the exact opposite. The seller first chooses her mechanism. In response, nature chooses a signal and a best response for the buyer that minimizes the seller's profit.

\subsection{Discussion of the model}

We make several modeling assumptions that are worth discussing before we proceed to the analysis. The fact that we allow for an arbitrary number of goods and for goods to be either complements or substitutes makes our environment more general than the bulk of the multidimensional screening literature (especially the subset that aims to derive optimal mechanisms), which either assumes two goods, additive values or both. We chose the particular class of value functions given by \eqref{eq: value function}, because it has the feature that only the posterior estimate is relevant for the mechanism design problem. For more general value functions, the posterior estimate is no longer sufficient to determine the buyer's utility and, for every signal realization $s$, the entire posterior distribution over $\Theta$ induced by the signal $G$ would become relevant. This significantly complicates the information design part of our problem because we would need to optimize over a significantly larger set of signals. It is for precisely this reason that the bulk of the information design literature also restricts attention to linear environments.

We assume that the distribution $F$ of the buyer's type is exchangeable. This assumption is primarily for tractability; we briefly discuss the complexities introduced by more general type distributions in our concluding remarks in \cref{sec:conclusion}. Nonetheless, the fact that this assumption does not rule out positive or negative correlations makes our environment significantly more general than the bulk of the existing literature. To the best of our knowledge, apart from the aforementioned \citet{HaHaRES2020}, there are very few results that characterize the optimal mechanism when values for goods are correlated even with additional functional form assumptions about the marginal distribution of each dimension. Recent work (see, for instance, \citealp{chen2013,chen2017}) has instead sought to understand when pure bundling can dominate separate sales (even though the former may not be the optimal mechanism).

\section{The buyer-optimal outcome}\label{sec:buyer_opt_out}

In this section, we formally define and characterize a buyer-optimal outcome. We then apply this characterization to derive a comparative statics result that relates the consumer surplus to the number of goods offered for sale. We begin with a few definitions.

\medskip

\noindent \textbf{Optimal Mechanism:} For a signal $ G\in \mathcal{G}$, an \textit{optimal mechanism} maximizes the seller's profit. Formally, a mechanism $\mM$ is optimal if it has a buyer best response $\sigma\in \Sigma(\mM)$ such that
$$\Pi( G,\mM,\sigma)\geq \Pi( G,\mM',\sigma')$$
for any other mechanism $\mM'$ and buyer best response $\sigma'\in\Sigma(\mM').$ 
Here we make the implicit assumption that the buyer chooses the best response that maximizes the seller's revenue (this could have a material impact on profits since, in particular, $ G$ may have atoms). This is the standard assumption in mechanism design. In other words, for a given signal $ G$, the optimal mechanism is the solution to a standard multidimensional screening problem. Despite being standard, we make this tie-breaking assumption explicit to provide a clear contrast with the optimal informationally robust mechanism that we define in \cref{sec:robust} (in which the buyer breaks ties against the seller).

\medskip

\noindent \textbf{Outcome:} We will refer to a pair $\left( G,\ \mM\right)$ as an \emph{outcome}, whenever $ G\in \mathcal{G}$ is a signal and $\mM$ is an \textit{optimal mechanism} for the seller in response to distribution $ G$. 

\medskip

\noindent \textbf{Buyer-Optimal Outcome:} Our goal in this section is to characterize the \emph{buyer-optimal outcome} $\left( G^*,\ \mM^*\right)$ that maximizes the buyer's surplus across all outcomes. Formally, we solve
\begin{align*}
	&\max_{\mM, G\in \mathcal{G}} \mathbb{E}_{ G}\left[U(s,\mM)\right] \\
	& \text{ such that }\\
	& \mM \text{ is an optimal mechanism for the signal } G.
\end{align*}
The constraint in this maximization problem is well defined because a seller optimal mechanism always exists for every signal (see, for instance, \citet{balder1996}). Our proof explicitly constructs the signal $ G^*$ in the buyer-optimal outcome, so we show that the maximum for the objective function is obtained.

\medskip

Before stating the result, we highlight a specific subclass of signals. These will play an important role in the derivations of both the buyer-optimal outcome and the optimal informationally robust mechanism.

\medskip

\noindent \textbf{Perfectly correlated signal:} We say that a signal $ G\in\mathcal{G}$ is maximally positively correlated across dimensions, or simply \textit{perfectly correlated}, if it is distributed along the diagonal $\{(s_1,\dots,s_n)\in S\;|\; s_1=\cdots=s_n\}$.

\medskip

We are now in a position to present the first of our two main results.

\begin{theorem}\label{thm:buyer_opt}
	There exists a buyer-optimal outcome $\left( G^*,\ \mM^*\right)$ which has the following properties.
	\begin{enumerate}[leftmargin=*]
		\item \emph{Seller Best-Response:} $\mM^*$ is a pure bundling mechanism so $\mM^*=\mM^{PB}_{\overline{p}^*}$ for some $\overline{p}^*$.
		\item \emph{Signal}: $ G^*$ is perfectly correlated.
		\item \emph{Total Surplus:} The buyer is allocated the grand bundle with probability one so trade is efficient.
	\end{enumerate}
\end{theorem}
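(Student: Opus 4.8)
The plan is to reduce the multi-dimensional problem to the one-dimensional single-good problem studied by \RS, using the grand bundle value as the single relevant scalar. First I would record the key feasibility fact: for any signal $G\in\mathcal{G}$, the induced distribution $H$ of the grand bundle value $\os:=s_1+\cdots+s_n=\mathbb{E}[\ot\mid s]$ is a mean-preserving contraction of $\overline{F}$; conversely, by the one-dimensional Strassen / mean-preserving-contraction characterization together with exchangeability, every such contraction is realized by a \emph{perfectly correlated} signal (reveal only a garbling of the sum and set $s_i=\mathbb{E}[\theta_i\mid\os]=\os/n$, which is unbiased by symmetry). Thus the set of achievable grand-bundle-value distributions is exactly the set of contractions of $\overline{F}$.

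Next, the upper bound on consumer surplus. For any signal $G$, pure bundling is available to the seller, so her optimal profit is at least $\max_{\overline{p}}\overline{p}\,(1-H(\overline{p}^-))$, the optimal posted-price revenue against $H$. Invoking the \RS characterization, this posted-price revenue is minimized over all contractions of $\overline{F}$ by a unit-elastic distribution $H^*$, with minimal value $\Pi^{RS}$; hence the seller's optimal profit against \emph{any} signal is at least $\Pi^{RS}$. Since consumer surplus equals total surplus minus profit, and total surplus never exceeds the maximal surplus $\overline{\mu}$ (attained only by always trading the grand bundle, by weak free disposal), every outcome satisfies $\mathbb{E}_G[U(s,\mM)]\le \overline{\mu}-\Pi^{RS}$.

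For achievability I would take $G^*$ to be the perfectly correlated signal whose grand bundle value is distributed as the \RS optimal $H^*$, and let $\overline{p}^*=\min(\supp H^*)=\Pi^{RS}$ be the \RS optimal price. The central step---and the main obstacle---is to show that pure bundling at $\overline{p}^*$ is a seller-\emph{optimal} mechanism against $G^*$, i.e. that no elaborate multi-dimensional menu beats it. Here I would exploit perfect correlation: on the diagonal the buyer's utility from any (random) allocation $q$ is $\os\cdot\sum_b q(b)\,\tfrac{\kappa_b|b|}{n}$, where weak free disposal forces $\tfrac{\kappa_b|b|}{n}\le 1$, so the expected ``quality'' lies in $[0,1]$ and the allocation is utility-equivalent, type by type, to a lottery that delivers the grand bundle with that same probability. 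Consequently the seller's problem collapses to selling a \emph{single} good (the grand bundle) to a one-dimensional buyer with value $\os\sim H^*$; for a single good and a single buyer, randomization is useless and a posted price is optimal (this is where one can alternatively invoke the pure-bundling criterion of \citet{HaHaRES2020}). By construction of $H^*$, posting $\overline{p}^*$ is then optimal and, since $\overline{p}^*$ is the bottom of the support, the buyer always buys.

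Finally I would close the loop: under $(G^*,\mM^{PB}_{\overline{p}^*})$ the grand bundle trades with probability one, so total surplus equals $\overline{\mu}$ and profit equals $\overline{p}^*=\Pi^{RS}$, giving consumer surplus $\overline{\mu}-\Pi^{RS}$ and matching the upper bound. This simultaneously establishes buyer-optimality and all three asserted properties (pure-bundling best response, perfect correlation, efficiency). The two places demanding the most care are the multivariate feasibility/contraction argument in the first step and the reduction-to-one-good argument that certifies pure-bundling optimality against $G^*$.
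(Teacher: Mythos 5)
Your proposal is correct, and its skeleton coincides with the paper's own proof: your feasibility step (the achievable distributions of the grand-bundle estimate $\os$ are exactly the mean-preserving contractions of $\overline{F}$, each realizable by a \emph{perfectly correlated} signal by garbling the sum and setting $s_i=\os/n$ via exchangeability) is precisely \cref{lemma:corr_signal}, and your use of the truncated-Pareto (unit-elastic) machinery of \citet{RoSzAER2017} to pin down the minimal profit and the efficient, bottom-of-support price mirrors Steps 1 and 2. The one genuinely different ingredient is how you certify that pure bundling is a seller-\emph{optimal} (not merely optimal-among-posted-price) mechanism against the diagonal signal. The paper applies the revelation principle to reduce to a one-dimensional screening problem and then invokes Proposition~1 of \citet{HaHaRES2020}, using the fact that on the diagonal the ratio of the grand-bundle value to any bundle value is constant. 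You instead note that weak free disposal forces $\kappa_b|b|/n\le 1$, so on the diagonal every random allocation is utility-equivalent, type by type, to a lottery over the grand bundle; this collapses the seller's problem to selling a single good to a one-dimensional buyer, where the classical Riley--Zeckhauser result (a posted price is optimal for a single buyer, so randomization and menus are useless) finishes the argument. Your route is more elementary and self-contained, and it makes transparent \emph{why} no multi-dimensional menu can help against a perfectly correlated signal; the paper's citation of \citet{HaHaRES2020} buys a criterion (non-increasing value ratios) that does not require the ratio to be exactly constant and avoids re-proving single-good posted-price optimality. Two further cosmetic differences: you organize the proof as a global upper bound on consumer surplus plus an explicit achieving outcome, whereas the paper runs an improvement argument over arbitrary outcomes and then minimizes over feasible Pareto truncations; and you import RS's minimization result as a black box where the paper reproduces the argument to remain self-contained. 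Both are legitimate, so your proof stands as written.
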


Before proceeding to the proof, it is worth briefly discussing this result. We begin by providing some intuition for why pure bundling emerges as the optimal mechanism. It turns out that perfect information about his type is not optimal for the buyer because this allows the seller to screen effectively. To prevent this, the signal in the buyer-optimal outcome injects two different kinds of noise into learning. First, it introduces perfect correlation. This effectively makes the type space one-dimensional and reduces the seller's ability to screen across dimensions. It is generically possible to construct informative perfectly correlated signals even though the $\theta_i$'s might be independently distributed or even negatively correlated.\footnote{Of course, a completely uninformative signal always induces perfect correlation.} 
In our setting, suppose that the buyer perfectly learned his grand bundle value $\theta_1+\cdots+\theta_n$ but nothing further. Then, because the distribution $F$ is exchangeable, the buyer's posterior estimate for each dimension will be identical and will simply be $(\theta_1+\cdots+\theta_n)/n$. 

Reducing the seller's ability to screen by introducing correlation could still harm the buyer as it might simultaneously lead to a reduction of total surplus. This can be prevented by injecting further noise into the signal: instead of telling the buyer his exact grand bundle value, the signal provides a noisy estimate. Loosely speaking, perfect correlation effectively reduces the seller's problem to its one-dimensional counterpart, and hence we can adapt the methods from \citet{RoSzAER2017} (who study the sale of a single good) to show that it is possible to construct such a signal so that trade of the grand bundle always happens. The latter is the efficient outcome because of the weak free-disposal assumption. Finally, if trade always happens, the seller must be best-responding by offering a pure bundling mechanism where the price for the grand bundle is the minimum of the support of the distribution of the posterior estimate of the grand bundle value induced by $G^*$. 

It might seem surprising that the signal in the buyer-optimal outcome inserts so much noise into learning; after all, as a result, the buyer might frequently purchase the grand bundle even when his true value is lower than the price. Both types of noise are necessary because, compared to the one-dimensional case, the larger set of mechanisms at the seller's disposal implies that a greater amount of noise is necessary to prevent effective screening. The buyer-optimal outcome exactly balances these two countervailing forces: making a good purchasing decision and preventing effective screening. 

\vspace*{.1in}

In fact, the buyer-optimal outcome is the worst outcome for the seller.

\begin{corollary}\label{cor:seller_min}
	The seller's profit in any outcome is weakly greater than her profit $\pi^*$ in any buyer-optimal outcome. As a consequence, trade happens with probability one in every buyer-optimal outcome.
\end{corollary}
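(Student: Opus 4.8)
The plan is to reduce the whole statement to the one-dimensional problem on the grand-bundle value and then invoke the minimality of the buyer-optimal (Roesler--Szentes) distribution. Write $CS^*:=\mathbb{E}_{G^*}[U(s,\mM^*)]$ for the maximal consumer surplus attained in a buyer-optimal outcome. Since \cref{thm:buyer_opt} produces an outcome with efficient trade, total surplus there equals $\overline{\mu}$, and as $\mM^*=\mM^{PB}_{\overline{p}^*}$ with trade occurring with probability one we have $\pi^*=\overline{p}^*=\overline{\mu}-CS^*$. The core of the corollary is the inequality $\Pi( G,\mM,\sigma)\geq\pi^*$ for every outcome $( G,\mM)$; I would establish this first and then read off the ``trade with probability one'' consequence.

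First I would fix an arbitrary signal $ G\in\mathcal{G}$ and pass to the induced distribution $\overline{G}$ of the grand-bundle value estimate $\os:=s_1+\dots+s_n$. Because signals are unbiased, $\os=\mathbb{E}[\,\overline{\theta}\mid s\,]$ is the posterior mean of the grand-bundle value, so by the standard Blackwell/Strassen fact the distribution $\overline{G}$ of these posterior means is a mean-preserving contraction of the prior $\overline{F}$. Now the seller always has the option of posting a pure bundling mechanism $\mM^{PB}_{p}$: under \eqref{eq:pure_bundling} the buyer with estimate $s$ truthfully reports $\os$ and purchases exactly when $\os\geq p$ (weak free disposal makes the grand bundle his most valuable bundle), so this mechanism earns $p\bigl(1-\overline{G}(p^-)\bigr)$. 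Optimality of $\mM$ for the signal $ G$ therefore gives
\[
\Pi( G,\mM,\sigma)\ \geq\ \sup_{p}\ p\bigl(1-\overline{G}(p^-)\bigr).
\]

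The key step is to show the right-hand side is at least $\pi^*$ for every mean-preserving contraction $\overline{G}$ of $\overline{F}$; this is precisely the one-dimensional buyer-optimal minimality applied to the value $\overline{\theta}$ with prior $\overline{F}$. Relying on the construction underlying \cref{thm:buyer_opt}, $\overline{G}^*$ is the Roesler--Szentes distribution: it places no mass below $\overline{p}^*=\pi^*$ and has a flat revenue curve, $1-\overline{G}^*(p)=\pi^*/p$ on $[\pi^*,n\theta_h]$. Suppose for contradiction that some contraction $\overline{G}$ had $\sup_p p\bigl(1-\overline{G}(p^-)\bigr)<\pi^*$. Then $1-\overline{G}(p^-)<\pi^*/p=1-\overline{G}^*(p)$ for every $p\in[\pi^*,n\theta_h)$, while $\overline{G}(p)\geq 0=\overline{G}^*(p)$ for $p<\pi^*$ and $\overline{G}(p)=\overline{G}^*(p)=1$ for $p\geq n\theta_h$; hence $\overline{G}\geq\overline{G}^*$ pointwise with strict inequality on a set of positive measure. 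Integrating $1-\overline{G}$ against $1-\overline{G}^*$ then forces $\mathbb{E}_{\overline{G}}[\os]<\mathbb{E}_{\overline{G}^*}[\os]$, contradicting the fact that both distributions are contractions of $\overline{F}$ and so share the mean $\overline{\mu}$. This yields $\Pi( G,\mM,\sigma)\geq\pi^*$ for every outcome.

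Finally I would deduce the consequence. In any buyer-optimal outcome the consumer surplus equals $CS^*$ by definition, and total surplus is bounded by $\overline{\mu}$, so $CS^*+\Pi\leq\overline{\mu}$, i.e.\ $\Pi\leq\overline{\mu}-CS^*=\pi^*$; combined with the inequality just proved this gives $\Pi=\pi^*$, whence total surplus equals $\overline{\mu}$ and, by weak free disposal, the grand bundle is traded with probability one. The main obstacle I anticipate is the one-dimensional minimality step: one must argue carefully that the flat-revenue, no-mass-below-$\pi^*$ distribution $\overline{G}^*$ is genuinely the pointwise-minimal contraction for the monopoly-revenue functional, and handle the atoms of $\overline{G}$ (the left-limits $\overline{G}(p^-)$ and the supremum over $p$) so that ``pointwise domination implies a strictly smaller mean'' is rigorous. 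This is exactly where the Roesler--Szentes machinery already invoked in the proof of \cref{thm:buyer_opt} does the work.
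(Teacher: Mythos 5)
There is a genuine gap at exactly the step you flagged as the crux, and it is not merely a technical loose end: your argument for the one-dimensional minimality cannot be repaired as stated. You assume that the buyer-optimal distribution $\overline{G}^*$ has the flat revenue curve $1-\overline{G}^*(p)=\pi^*/p$ on all of $[\pi^*,n\theta_h]$, i.e.\ that its support reaches $n\theta_h$. But $\overline{G}^*=H_{\pi^*}=H_{\pi^*,\beta(\pi^*)}$ is a \emph{truncated} Pareto distribution whose truncation point $\beta(\pi^*)$ is pinned down by the mean condition $\pi^*\left(1+\log\left(\beta(\pi^*)/\pi^*\right)\right)=\overline{\mu}$, and in general $\beta(\pi^*)<n\theta_h$: what determines $\pi^*=\alpha^*$ is the mean-preserving-spread constraint $\overline{F}\succsim H_{\alpha}$ binding at an \emph{interior} point $\os^*$ (cf.\ \eqref{eq:sosd_bind}), not the constraint $\beta(\alpha)\leq n\theta_h$. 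Consequently, from $\sup_p p\,(1-\overline{G}(p^-))<\pi^*$ you may conclude $\overline{G}(p)>\overline{G}^*(p)$ only on $[\pi^*,\beta(\pi^*))$; on $[\beta(\pi^*),n\theta_h)$ one has $\overline{G}^*(p)=1\geq\overline{G}(p)$, so the claimed pointwise domination $\overline{G}\geq\overline{G}^*$ fails, and with it the strict mean comparison. Worse, your proof uses the hypothesis ``$\overline{G}$ is a contraction of $\overline{F}$'' only through the mean equality $\mathbb{E}_{\overline{G}}[\os]=\overline{\mu}$, and the target claim is simply false at that level of generality: letting $\alpha_0$ solve $\alpha_0\left(1+\log\left(n\theta_h/\alpha_0\right)\right)=\overline{\mu}$, the (untruncated) Pareto $H_{\alpha_0,n\theta_h}$ has mean $\overline{\mu}$, satisfies every first-order bound you invoke, and has monopoly revenue $\alpha_0<\pi^*$ whenever $\beta(\pi^*)<n\theta_h$. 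It is only excluded because it violates $\overline{F}\succsim H_{\alpha_0,n\theta_h}$, a second-order constraint your argument never touches.

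The repair is precisely the paper's machinery, which its own (much shorter) proof of \cref{cor:seller_min} invokes. Given an outcome $(G,\mM)$ with profit $\pi$, Step~1 of the proof of \cref{thm:buyer_opt} shows $H_{\pi,n\theta_h}$ first-order dominates $\overline{G}$, uses the intermediate value theorem to pick $\tau$ with $\mathbb{E}_{H_{\pi,\tau}}[\os]=\overline{\mu}$, verifies the \emph{second-order} relation $\overline{G}\succsim H_{\pi,\tau}$, and then transitivity with $\overline{F}\succsim\overline{G}$ (\cref{lemma:corr_signal}, part (i)) yields $H_{\pi}\in\mathcal{\overline{G}}$; the minimality of $\alpha^*$ in Step~2 then forces $\pi\geq\alpha^*=\pi^*$. (The paper packages this as: any outcome with profit $\pi<\pi^*$ can be converted into an outcome with efficient trade and profit exactly $\pi$, hence consumer surplus $\overline{\mu}-\pi>\overline{\mu}-\pi^*$, contradicting buyer-optimality.) The rest of your proposal is sound and parallels the paper: the reduction to the pure-bundling revenue bound with careful atom handling via $\overline{G}(p^-)$, and the final deduction that equality of profits plus the surplus identity forces trade with probability one in every buyer-optimal outcome. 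Only the minimality step must be run with mean-preserving spreads rather than pointwise CDF domination.
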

In words, this corollary (proved in \cref{app:proofs}) says that every buyer-optimal outcome not only maximizes consumer surplus, it also minimizes producer surplus. The first part of the above corollary implies that the seller's profit $\pi^*$ must be the same in every buyer-optimal outcome. Statement (3) in \cref{thm:buyer_opt} then implies that trade must be efficient in every buyer-optimal outcome.

This corollary also implies that the seller's revenue in the buyer-optimal outcome is the solution to the min-max problem where an adversarial nature first picks the signal and the seller then chooses an optimal mechanism in response. In other words,
\begin{equation}\label{eq:minmax}
	\pi^*=\min_{ G\in\mathcal{G}}\max_{\mM,\sigma\in\Sigma(\mM)}\Pi( G,\mM,\sigma).
\end{equation}
This fact will play an important role in the derivation of the optimal informationally robust mechanism.

\medskip

We make one last observation before presenting the proof of \autoref{thm:buyer_opt}. Note that the buyer-optimal outcome is not unique and, in particular, pure bundling need not be the unique seller best response. Specifically, if $\kappa_b \leq \kappa_N=1$ for all $b\subset N$, then it will also be optimal for the seller to offer a separate sales mechanism where the price vector for the goods is simply the minimum of the support of $ G^*$. This is because the buyer will still prefer to always buy the grand bundle when faced with this mechanism. However, this is no longer the case when $\kappa_b>1$. We illustrate this in the following example.

\medskip

\noindent \textbf{Example}. Suppose there are two goods and that $\kappa_{\{i\}}>1$ for both $i\in \{1,2\}$. We now argue that separate sales cannot be the seller's optimal mechanism in \textit{any} buyer-optimal outcome. As a contradiction, suppose that there is a buyer-optimal outcome $\left(G^{\star},\mM^{Sep}_p\right)$ where $\mM^{Sep}_p$ is a separate sales mechanism at prices $p=(p_1,p_2)$. \cref{cor:seller_min} implies that trade must be efficient and so the buyer must always purchase the grand bundle. This in turn implies that, for every $\varepsilon>0$, we must have $G^{\star}(\{(s_1,s_2)\;|\; (s_1+s_2)-(p_1+p_2)\leq \varepsilon\})>0$. In words, there must be a positive mass of buyer types whose grand bundle value is just above the total price for the grand bundle as, otherwise, the seller could earn a greater profit by instead offering a pure bundling mechanism at the higher price $p_1+p_2+\varepsilon$.

Now consider the positive mass of types that satisfy $0\leq (s_1+s_2)-(p_1+p_2)\leq \varepsilon$. The buyer prefers to purchase the grand bundle instead of just good $i$ whenever $\kappa_{\{i\}} s_i -p_i \leq (s_1+s_2)-(p_1+p_2)$. Adding up over the two goods, we get
$$(\kappa_b-1) (p_1+p_2)\leq (\kappa_b-1) (s_1+s_2)\leq (\kappa_{\{1\}}-1) s_1+ (\kappa_{\{2\}}-1) s_2\leq (s_1+s_2)-(p_1+p_2)\leq\varepsilon$$
where $\kappa_b=\min\{\kappa_{\{1\}},\kappa_{\{2\}}\}$. For small enough $\varepsilon>0$, the above inequality cannot be true since we must have $p_1+p_2>0$. This provides the requisite contradiction because it implies that there will be a positive mass of buyer types that will strictly prefer to not buy the grand bundle thereby implying trade is not efficient in the buyer-optimal outcome $\left(G^{\star},\ \mM^{Sep}_p\right)$.

\subsection{Proof of \cref{thm:buyer_opt}}\label{sec:proof_main_result}

In this subsection, we prove \cref{thm:buyer_opt}. Readers who are not interested in the details of the proof can skip to \cref{sec:comp_statics} without loss of continuity. However, the proof of \cref{thm:robust} (the characterization of the optimal informationally robust mechanism) builds on the arguments that follow.

\medskip

We begin by introducing some additional notation and terminology. Given $s\in S$, we use $\bar{s}=s_1+\cdots+s_n$ to denote the sum. The set of all such $\overline{s}$ is denoted by $\overline{S}$; note that $\overline{S}=\overline{\Theta}$ (because $S=\Theta$) but we use distinct notation nonetheless to distinguish the sum of the signal realization vector from the sum of the type vector. Every signal $G\in\mathcal{G}$ induces a distribution $\overline{G}\in \Delta(\overline{S})$ over the posterior estimates $\bar{s}$ of the grand bundle value. We use $\overline{\mathcal{G}}$ to denote the set of these distributions of grand bundle estimates that are induced by some signal $G\in\mathcal{G}$.

Given two distributions $\overline{G},\overline{G}'\in \Delta(\overline{S})$, we say that $\overline{G}$ is a \textit{mean-preserving spread} of $\overline{G}'$ or $\overline{G} \succsim \overline{G}'$ if 
\begin{equation}\label{eq:sosd}  \int_{n\theta_{\ell}}^{\overline{s}} \overline{G}(x) \td x \geq \int_{n\theta_{\ell}}^{\overline{s}} \overline{G}' (x) \ \td x\quad \text{ for all } \overline{s}\in[n\theta_{\ell},n\theta_h]
	\text{ with equality for } \overline{s}=n\theta_h.
\end{equation}

\medskip

We now establish some properties of the set $\overline{\mathcal{G}}$.

\begin{lemma}\label{lemma:corr_signal} 
	The set of distributions over grand bundle estimates induced by signals in $\mathcal{G}$ has the following properties:
	\begin{enumerate}
		\item[(i)] $\overline{G}$ is a distribution over grand bundle estimates iff $\overline{F}$ is a mean-preserving spread of $\overline{G}$, or equivalently, $$\overline{\mathcal{G}}=\left\{\overline{G} \in \Delta(\overline{S}) \; | \; \overline{F}\succsim \overline{G} \right\}.$$
		\item[(ii)] For every signal $G\in \mathcal{G}$, there exists a signal $G'\in \mathcal{G}$ that is perfectly correlated such that $G$ and $G'$ induce the same distribution $\overline{G}=\overline{G}'$ over posterior grand bundle estimates.
	\end{enumerate}
\end{lemma}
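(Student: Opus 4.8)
The plan is to reduce both parts to the classical one-dimensional characterization of distributions of posterior means: for a prior $\overline{F}$ on $\overline{S}$, the set of distributions of posterior means $\mathbb{E}[\overline{\theta}\mid\text{signal}]$ generated by some (one-dimensional) signal is exactly $\{\overline{G}\in\Delta(\overline{S})\;|\;\overline{F}\succsim\overline{G}\}$ — the standard mean-preserving-spread / convex-order characterization used throughout the Blackwell and information-design literatures. The two structural facts I will exploit are the tower property of conditional expectations and, crucially, exchangeability of $F$, which forces $\mathbb{E}[\theta_i\mid\overline{\theta}]=\overline{\theta}/n$ for every $i$ (by symmetry all the conditional expectations $\mathbb{E}[\theta_i\mid\overline{\theta}]$ agree, and they sum to $\overline{\theta}$).

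For the forward inclusion in (i), take any signal $G_{S\times\Theta}\in\mathcal{G}$. Unbiasedness gives $\mathbb{E}[\theta\mid s]=s$, so summing coordinates yields $\mathbb{E}[\overline{\theta}\mid s]=\overline{s}$. Since $\overline{s}$ is a function of $s$, the tower property gives $\mathbb{E}[\overline{\theta}\mid\overline{s}]=\mathbb{E}[\mathbb{E}[\overline{\theta}\mid s]\mid\overline{s}]=\overline{s}$. Hence $\overline{s}$ is a bona fide posterior mean of $\overline{\theta}$ relative to the coarser information generated by $\overline{s}$, and so (by Jensen's inequality against the convex test functions, or directly by the one-dimensional characterization) $\overline{G}$ is a mean-preserving contraction of $\overline{F}$, i.e.\ $\overline{F}\succsim\overline{G}$. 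This direction is a routine martingale argument.

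For the reverse inclusion in (i), fix $\overline{G}$ with $\overline{F}\succsim\overline{G}$. The one-dimensional result supplies a joint law of $(\overline{\theta},\overline{s})$ with $\overline{\theta}$-marginal $\overline{F}$, $\overline{s}$-marginal $\overline{G}$, and $\mathbb{E}[\overline{\theta}\mid\overline{s}]=\overline{s}$. I then lift this to a full signal by placing all mass on the diagonal: draw $\theta\sim F$, draw $\overline{s}$ from the conditional kernel $\overline{\theta}\mapsto\overline{s}$, and set the signal realization to $s=(\overline{s}/n,\dots,\overline{s}/n)$. By construction this has $\Theta$-marginal $F$, is perfectly correlated, and induces exactly $\overline{G}$; it remains only to verify unbiasedness. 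Here exchangeability does the work: $s$ is a deterministic (and invertible) function of $\overline{s}$, and $\theta\to\overline{\theta}\to\overline{s}$ is a Markov chain because $\overline{s}$ is drawn using only $\overline{\theta}$. Therefore $\mathbb{E}[\theta_i\mid s]=\mathbb{E}[\theta_i\mid\overline{s}]=\mathbb{E}[\mathbb{E}[\theta_i\mid\overline{\theta},\overline{s}]\mid\overline{s}]=\mathbb{E}[\mathbb{E}[\theta_i\mid\overline{\theta}]\mid\overline{s}]=\mathbb{E}[\overline{\theta}/n\mid\overline{s}]=\overline{s}/n=s_i$, so the signal is unbiased and lies in $\mathcal{G}$.

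Part (ii) then falls out for free: given any $G\in\mathcal{G}$ inducing $\overline{G}$, the forward direction of (i) gives $\overline{F}\succsim\overline{G}$, and the diagonal construction just described produces a perfectly correlated $G'\in\mathcal{G}$ with $\overline{G}'=\overline{G}$. I expect the main obstacle to be precisely the reverse direction of (i) — verifying that the diagonal lift is unbiased — since that is where exchangeability (via $\mathbb{E}[\theta_i\mid\overline{\theta}]=\overline{\theta}/n$) and the Markov structure of the construction are both essential; everything else is bookkeeping around the one-dimensional characterization.
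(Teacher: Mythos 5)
Your proposal is correct and follows essentially the same route as the paper's proof: the forward inclusion is the same tower-property argument, the reverse inclusion uses the same construction (invoke the one-dimensional mean-preserving-spread characterization, attach the grand-bundle signal to $\theta$ through a kernel depending only on $\overline{\theta}$, and lift to the diagonal $s=(\overline{s}/n,\dots,\overline{s}/n)$), with exchangeability delivering unbiasedness via $\mathbb{E}[\theta_i\mid\overline{\theta}]=\overline{\theta}/n$ --- the paper phrases this as exchangeability of the posterior $\hat{H}(\cdot\mid\overline{s})$ over $\Theta$, but it is the same fact. Part (ii) is obtained from that construction exactly as in the paper.
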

\begin{proof}[Proof of \cref{lemma:corr_signal}]
	We begin by defining one-dimensional signals that only provide the buyer information about his grand bundle value. These are (unbiased) signals $(\overline{S},H_{\overline{S}\times \overline{\Theta}} )$ where the set of signal realizations $\overline{S}$ is just the set of possible grand bundle values and $H_{\overline{S}\times \overline{\Theta}}\in \Delta(\overline{S}\times\overline{\Theta})$ is a joint distribution over $\overline{S}\times \overline{\Theta}$ such that the marginal distribution of $H_{\overline{S}\times \overline{\Theta}}$ over $\overline{\Theta}$ is $\overline{F}$ and  
	$$\overline{s}=\mathbb{E}_{H_{\overline{S}\times \overline{\Theta}}} [\overline{\theta}\,|\,\overline{s}]$$
	for all $\overline{s}$ in the support. We refer to such one-dimensional signals as \emph{grand-bundle signals}.
	
	\medskip
	
	We use $H$ to denote the marginal distribution of $H_{\overline{S}\times \overline{\Theta}}$ over the set of signal realizations $\overline{S}$ and use $\mathcal{H}$ to denote the set of all such distributions $H$ over grand bundle estimates. As we argued when we defined signals $(S,G_{S\times\Theta})$ for the type vector, it is without loss to restrict attention to such unbiased signals.
	
	Now note that elements of $\mathcal{H}$ are just cdfs of real-valued random variables. We defined this set because it has a well known characterization
	\begin{equation}\label{eq:signal_def}
		\mathcal{H}=\left\{H \in \Delta(\overline{S}) \; | \; \overline{F}\succsim H \right\},
	\end{equation}
	or, in words,   $H$ is the marginal distribution of posterior estimates of grand bundle values for some grand bundle signal   iff $\overline{F}$ is a mean-preserving spread of $H$. We will use this characterization to establish the first part of the lemma.
	
	\medskip
	
	We first argue that $\overline{\mathcal{G}}\subseteq \mathcal{H}$. To see this, observe that signal $G_{S\times \Theta}$ induces a joint distribution $\overline{G}_{\overline{S}\times \overline{\Theta}}$ over $\overline{S}\times \overline{\Theta}$ such that the marginal distribution of $\overline{G}_{\overline{S}\times \overline{\Theta}}$ over $\overline{\Theta}$ is $\overline{F}$ and the marginal distribution over $\overline{S}$ is $\overline{G}$. Formally, this is the image measure of $G_{S\times \Theta}$ generated by the mapping $a(s,\theta)=(s_1+\cdots+s_n,\theta_1+\cdots+\theta_n)$ which implies that, for any measurable set $A\subset \overline{S}\times\overline{\Theta}$, we have $\overline{G}_{\overline{S}\times \overline{\Theta}}(A)=G_{S\times \Theta}(a^{-1}(A))$. Moreover, observe that
	$$\mathbb{E}_{\overline{G}_{\overline{S}\times \overline{\Theta}}}\left[\overline{\theta}|\overline{s}\right]=\mathbb{E}_{G_{S\times \Theta}}\left[\overline{\theta}|\overline{s}\right]=\mathbb{E}_{G_{S\times \Theta}}\left[\mathbb{E}_{G_{S\times \Theta}}[\theta_1+\cdots+\theta_n|s]|\overline{s}\right]=\mathbb{E}_{G_{S\times \Theta}}[s_1+\cdots+s_n|\overline{s}]=\overline{s}$$
	for all $\overline{s}$ in the support. Therefore the marginal distribution $\overline{G}$ over $\overline{S}$ induced by $\overline{G}_{\overline{S}\times \overline{\Theta}}$ satisfies $\overline{G}\in \mathcal{H}$.
	
	\medskip
	
	To complete the proof of part (i), we need to show that $\mathcal{H}\subseteq \overline{\mathcal{G}}$. We will in fact also show part (ii) by arguing that, for every $H\in \mathcal{H}$, there exists a perfectly correlated signal $G\in \mathcal{G}$ such that the distribution it induces on grand bundle estimates satisfies $\overline{G}=H$.
		
	By definition, $H$ is the marginal distribution over $\overline{S}$ corresponding to a joint distribution $H_{\overline{S}\times\overline{\Theta}}\in \Delta(\overline{S}\times\overline{\Theta})$ that has marginal distribution $\overline{F}$ over $\overline{\Theta}$ and that satisfies $\overline{s}=\mathbb{E}_{H_{\overline{S}\times \overline{\Theta}}} [\overline{\theta}\,|\,\overline{s}]$. We use $H_{\overline{S}\times\overline{\Theta}}$ to  define a family of conditional distributions $\hat{H}(\cdot|\theta)\in \Delta(\overline{S})$ as follows
	\begin{equation}\label{eq:lem1}
		\hat{H}(\cdot|\theta):=H(\cdot|\theta_1+\cdots+\theta_n).
	\end{equation}	
	This combined with the distribution $F$ over $\Theta$ generates a joint distribution $\hat{H}_{\overline{S}\times \Theta} $ over $\overline{S}\times \Theta$ whose marginal distributions over $\overline{S}$ and $\Theta$ are $\hat{H}=H$ and $F$ respectively.
	
	Now observe that
	$$\mathbb{E}_{\hat{H}_{\overline{S}\times \Theta} }\left[\theta_1+\cdots+\theta_n| \overline{s}\right] =\overline{s}$$
	for all $\overline{s}$ in the support. This is a consequence of the definition \eqref{eq:lem1} of $\hat{H}_{\overline{S}\times \Theta} $ and from the fact that $\overline{s}=\mathbb{E}_{H_{\overline{S}\times \overline{\Theta}}} [\overline{\theta}\,|\,\overline{s}]$.
	
	Given the joint distribution $\hat{H}_{\overline{S}\times \Theta} $, we can derive the conditional distribution $\hat{H}(\cdot | \overline{s})$  over $\Theta$. Now observe that the conditional distribution $\hat{H}(\cdot | \overline{s})$ is exchangeable. This follows from the definition of $\hat{H}_{\overline{S}\times \Theta} $ and because $F$ is assumed to be exchangeable. This in turn implies
	$$\mathbb{E}_{\hat{H}}\left[\theta_i| \overline{s}\right] =\frac{\overline{s}}{n} \quad \text{ for all } i\in\{1,\dots,n\}.$$
	
	Now define a joint distribution $G_{S\times \Theta}$ over $S\times\Theta$ that is the image measure of $\hat{H}_{\overline{S}\times \Theta} $ generated by the mapping $\hat{a}(\overline{s},\theta)=\left(\frac{\overline{s}}{n},\dots, \frac{\overline{s}}{n},\theta \right)$. Formally, for any measurable $\hat{A}\subseteq S\times\Theta$, we have $G_{S\times \Theta}(\hat{A})=\hat{H}_{\overline{S}\times \Theta} (\hat{a}^{-1}(\hat{A}))$. By construction, the marginal distribution of $G_{S\times \Theta}$ over $\Theta$ is $F$, the marginal distribution $ G$ over $S$ is distributed along the diagonal $\{(s_1,\dots,s_n)\in S\;|\; s_1=\cdots=s_n\}$ and so is perfectly correlated. 
	
	Observe that
	$$\mathbb{E}_{G_{S\times \Theta}}\left[\theta_i\,\bigg|\, s=\left(\frac{\overline{s}}{n},\dots, \frac{\overline{s}}{n}\right)\right] =\mathbb{E}_{\hat{H}_{\overline{S}\times \Theta} }\left[\theta_i| \overline{s}\right] =\frac{\overline{s}}{n} \quad \text{ for all } i\in\{1,\dots,n\}$$
	and so $ G\in \mathcal{G}$ or, in words, that $ G$ is an unbiased signal. Finally, by construction, the distribution $\overline{G}$ over posterior grand bundle estimates induced by $G$ satisfies $\overline{G}=\hat{H}=H$ which completes the proof.
\end{proof}

\medskip

We now define the special class of truncated Pareto distributions that we will employ in the proof of \cref{thm:buyer_opt} below. These are defined as
\begin{equation} \label{eq: TPD} 
	H_{\alpha,\beta}(\overline{\theta})=\left\{\begin{array}{cl}
		0 \ & \ \text{ if } \ \overline{\theta}<\ua, \\
		1-\alpha/\overline{\theta} \ & \ \text{ if } \ \overline{\theta}\in [\alpha,\ \beta),\\
		1 \ & \ \text{ if } \overline{\theta} \geq \beta,\\
	\end{array}\right.
\end{equation}
where $\alpha\leq \beta$. Truncated Pareto distributions are supported on $[\alpha,\beta]$, are continuous on $(\alpha,\beta)$ and have an atom of size $\frac{\alpha}{\beta}$ at the truncation point $\beta$. When $\alpha=\beta$, $H_{\alpha,\alpha}$ is the degenerate distribution with an atom of size $1$ at $\alpha$.

\medskip

We use $\mathcal{\overline{G}}^{P} \subset \mathcal{\overline{G}}$ to denote the subset of distributions over grand bundle estimates (induced by signals in $\mathcal{G}$) that are truncated Pareto distributions. Formally,
$$\mathcal{\overline{G}}^{P} =\left\{ \overline{G} \in \mathcal{\overline{G}}  \; | \; \overline{G} = H_{\alpha,\beta}\;\; \text{ for some } \;\;n\theta_{\ell}\leq \alpha\leq \beta \leq n\theta_h\; \right\}.$$
This set is non-empty because it includes the distribution $H_{\overline{\mu},\overline{\mu}}\in \mathcal{\overline{G}}$ over grand bundle estimates induced by the completely uninformative signal. 

It is easy to show that there is a continuous, strictly decreasing function $\beta(\alpha)$ such that every element of $\mathcal{\overline{G}}^{P}$ is of the form $H_{\alpha,\beta(\alpha)}$. Hence, in what follows, we simplify notation and only use the lower bound of the support $H_{\alpha}\in \mathcal{\overline{G}}^{P}$ to denote a truncated Pareto distribution over grand bundle estimates induced by a signal, since this alone suffices to describe this distribution. In other words, whenever we refer to a distribution $H_{\alpha}$ we are implicitly assuming $H_{\alpha}\in \mathcal{\overline{G}}^{P}$ and that $H_{\alpha}=H_{\alpha,\beta(\alpha)}$.

\medskip

\citet{RoSzAER2017} showed that the class of truncated Pareto distributions can be used to characterize the buyer-optimal outcome for a single good. The truncated Pareto distribution $H_{\alpha,\beta}$ has the property that all pure bundling mechanisms with price $\overline{p}$ in $[\alpha,\beta]$ yield the same profit $\alpha$. Since their work, the properties of this class of distributions have been exploited in several information design papers.

\medskip

We are now ready to prove \cref{thm:buyer_opt}.

\medskip

\begin{proof}[Proof of \cref{thm:buyer_opt}]

We prove the theorem in two steps. The first step adapts the proof of Lemma 1 in \citet{RoSzAER2017} to our multidimensional context. While the proofs are similar, we reproduce the complete argument here so that our proof is self-contained, and we flag the key differences.

\medskip

\noindent \underline{Step 1:} Consider an arbitrary outcome $\left( G,\ \mM\right)$ at which the seller's profit is $\pi\in [n\theta_{\ell},n\theta_h]$. There exists a perfectly correlated signal $ G' \in \mathcal{G}$ that satisfies the following properties.
\begin{enumerate}[label=\roman*]
	\item $ G' \in \mathcal{G}$ induces a truncated Pareto distribution $\overline{G}' =H_{\pi}\in \mathcal{\overline{G}}^P$ over grand bundle estimates.
	\item The buyer's consumer surplus under signal $ G'$ is weakly higher (than that under outcome $\left( G,\ \mM\right)$) when the seller offers the pure bundling mechanism $\mM^{PB}_{\pi}$ at price $\pi$.
	\item Pure bundling mechanism $\mM^{PB}_{\pi}$ yields the seller the highest profit in the set of pure bundling mechanisms.
\end{enumerate}

\medskip

\noindent \underline{Proof of Step 1:} Let $\overline{G}$ be the distribution on the grand bundle estimates induced by $ G$. Given this signal, if the seller offers a pure bundling mechanism $\mM^{PB}_{\overline{p}}$ at price $\overline{p}$, the buyer purchases the grand bundle if his posterior estimate of the grand bundle value satisfies $\overline{s}\geq \overline{p}$ (where note that we assume the buyer purchases when $\overline{s}= \overline{p}$). Therefore, the seller's profit from this pure bundling mechanism satisfies
$$\overline{p}\left[1-\overline{G}(\overline{p})+\delta_{\overline{G}}(\overline{p})\right]\leq \pi \quad \Longleftrightarrow \quad 1-\frac{\pi}{\overline{p}}\leq \overline{G}(\overline{p})-\delta_{\overline{G}}(\overline{p})$$
where $\delta_{\overline{G}}(\overline{p})$ is the mass of distribution $\overline{G}$ at $\overline{p}$ (if $\overline{G}$ has no atom at $\overline{p}$, this will be 0). The left inequality follows from the fact that $\mM$ is an optimal mechanism in response to signal $ G$.

Observe that since $\delta_{\overline{G}}(\overline{p})\geq 0$, we can conclude from the right inequality that $H_{\pi,n\theta_h}(\overline{p})\leq \overline{G}(\overline{p})$ for all $\overline{p}\in [n\theta_{\ell}, n\theta_h]$ because, recall that,
$$H_{\pi,n\theta_h}(\overline{s})=\left\{\begin{array}{cl}
	0 \ & \ \text{ if } \ \overline{s}<\pi, \\
	1-\frac{\pi}{\overline{s}} \ & \ \text{ if } \ \overline{s}\in [\pi,n\theta_h),\\
	1 \ & \ \text{ if } \overline{s} \geq n\theta_h.\\
\end{array}\right.$$
In words, $H_{\pi,n\theta_h}$ first-order stochastically dominates $\overline{G}$. This implies
\begin{equation}\label{eq:RS_ineq} 
	\int_{n\theta_{\ell}}^{n\theta_h} \overline{s}\, \td H_{\pi,n\theta_h}(\overline{s}) \geq \int_{n\theta_{\ell}}^{n\theta_h} \overline{s} \, \td \overline{G}(\overline{s}) = \overline{\mu} \geq \pi= \int_{n\theta_{\ell}}^{n\theta_h} \overline{s}\, \td H_{\pi,\pi}(\overline{s}).
\end{equation}
The left inequality follows from the first-order stochastic dominance we established above. The left equality is a consequence of the fact that $ G$ is a signal and so $\int_{n\theta_{\ell}}^{n\theta_h} \overline{s} \, \td \overline{G}(\overline{s})=\int_{n\theta_{\ell}}^{n\theta_h} \overline{\theta} \, \td \overline{F}(\overline{\theta}) = \overline{\mu}$. The right equality follows from the fact that the distribution $H_{\pi,\pi}$ is just the Dirac measure that has an atom of mass 1 at $\pi$.

The simple but key observation in the proof of this step (relative to the one dimensional case analyzed in \citet{RoSzAER2017}) is the right inequality. This follows from the weak free-disposal assumption on the buyer's value function, which implies that always trading the grand bundle generates the maximal surplus, and therefore we must have $\pi \leq \overline{\mu}$. This implies that there exists a pure bundling mechanism that would generate a higher consumer surplus.

Since $\int_{n\theta_{\ell}}^{n\theta_h} \overline{s}\, \td H_{\pi,z}(\overline{s})$ is continuous and strictly increasing in $z\in[\pi,n\theta_h]$, the inequality \eqref{eq:RS_ineq} combined with the intermediate value theorem imply that there is a unique $\tau\in[\pi,n\theta_h]$ such that $\int_{n\theta_{\ell}}^{n\theta_h} \overline{s}\, \td H_{\pi,\tau}(\overline{s})  = \overline{\mu}$. In words, the truncated Pareto distribution $H_{\pi,\tau}$ has the same mean as $\overline{F}$ and $\overline{G}$.

\medskip

We will now argue that  $\overline{G}$ is a mean-preserving spread of $H_{\pi,\tau}$ or, in our notation, that $\overline{G}\succsim H_{\pi,\tau}$. To see this, first observe that $H_{\pi,\tau}(\overline{s})=H_{\pi,n\theta_h}(\overline{s})\leq \overline{G}(\overline{s})$ on $\overline{s}\in[n\theta_{\ell},\tau)$. This implies that for any $z\in [n\theta_{\ell}, \tau)$  
$$\int_{n\theta_{\ell}}^z \overline{G}(\overline{s}) \td \overline{s} \geq \int_{n\theta_{\ell}}^z H_{\pi,\tau}(\overline{s}) \td \overline{s}, $$
and for any $z\in [\tau,n\theta_h]$,
\begin{equation} \int_{n\theta_{\ell}}^z \overline{G}(\overline{s}) \td \overline{s}=\overline{\mu}-\int_z^{n\theta_h} \overline{G}(\overline{s}) \td \overline{s} \geq \overline{\mu}-\int_z^{n\theta_h} H_{\pi,\tau}(\overline{s}) \td \overline{s}=\int_0^z H_{\pi,\tau}(\overline{s}) \td \overline{s}, 
\end{equation}
where the inequality follows from $H_{\pi,\tau}(\overline{s})=1\geq \overline{G}(\overline{s})$ for all $\overline{s}\geq \tau.$

\medskip

Part (i) of \cref{lemma:corr_signal} shows that $\overline{F} \succsim \overline{G}$ and therefore $\overline{G}\succsim H_{\pi,\tau}$ implies $\overline{F} \succsim H_{\pi,\tau}$. In other words, $H_{\pi}\in\mathcal{\overline{G}}$ is a distribution over posterior grand bundle estimates induced by a signal.

\medskip

Part (ii) of \cref{lemma:corr_signal} implies that we can find a signal $ G'\in \mathcal{G}$ that is perfectly correlated and that induces a distribution $\overline{G}'$ over posterior grand bundle estimates that satisfies $\overline{G}'=H_{\pi}\in\mathcal{\overline{G}}$.

\medskip

Finally, observe that the pure bundling mechanism $\mM^{PB}_{\pi}$ at price $\pi$ is an optimal pure bundling mechanism (though not necessarily an optimal mechanism) for the seller when the buyer's posterior grand bundle estimate is distributed according to $H_{\pi}$. Since this price is the minimum of the support of $H_{\pi}$, trade is efficient and therefore the buyer's surplus is weakly higher in this case than compared to outcome $( G, \mM)$. This is because profits are the same in both cases, and the maximal surplus $\overline{\mu}$ is generated by $\mM^{PB}_{\pi}$.

\medskip

\noindent \underline{Step 2:} There exists a buyer-optimal outcome $\left( G^*,\ \mM^*\right)$ such that:
\begin{enumerate}[label=\roman*]
	\item Trade is efficient.
	\item $G^*$ is perfectly correlated.
	\item $G^*$ induces a truncated Pareto distribution $\overline{G}^*\in \mathcal{\overline{G}}^{P}$ over grand bundle estimates.
	\item $\mM^*$ is a pure bundling mechanism.
\end{enumerate}

\medskip

\noindent \underline{Proof of Step 2:} Let $\left( G,\ \mM\right)$ be an outcome. Then by Step 1, there exists a signal $ G'\in \mathcal{G}$ that is perfectly correlated such that $\overline{G}'=H_{\alpha}\in \mathcal{\overline{G}}$ for some $n\theta_{\ell}\leq \alpha\leq n\theta_h$ and the buyer is weakly better off under this signal when the seller offers a pure bundling mechanism $\mM^{PB}_{\alpha}$ at price $\alpha$. We now argue that $\mM^{PB}_{\alpha}$ is an optimal mechanism for the seller in response to $ G'$.

\medskip

%

So suppose that the buyer learns via signal $ G'$. Then, for any mechanism $\mM'=(M',q',t')$ and any buyer best response $\sigma'\in \Sigma(\mM')$, the seller's revenue satisfies $$\int_{s\in S} \mathbb{E}_{\sigma'(s)}\,[t(m)]\, \td  G'(s)=\int_{\overline{s}\in \overline{S}} \mathbb{E}_{\sigma'\left(\frac{\overline{s}}{n},\dots,\frac{\overline{s}}{n}\right)}\,[t(m)]\, \td\overline{G}'(\overline{s})$$
because $ G'$ is perfectly correlated and so the distribution is supported on the diagonal. Observe that this implies that, given the signal $ G'$, the seller is effectively solving a one-dimensional mechanism design problem. The Revelation Principle applies so the seller's problem is equivalent to choosing a direct mechanism $\overline{\mM}=(\overline{M},\overline{q},\overline{t})$ with $\overline{M}=\overline{S}$, $\overline{q}:\overline{S}\to \Delta(2^N)$, $\overline{t}:\overline{S}\to \mathbb{R}$ (defined on a one-dimensional type space where the type is the grand bundle value) to solve
\begin{equation}\label{eq:step2_relaxed}
	\begin{array}{l}
		\max_{(\overline{q},\overline{t})} \bigintsss_{\overline{s}\in \overline{S}} t\left(\overline{s}\right) d\overline{G}'\left(\overline{s}\right)\\
		\text{subject to}\\
		\mathbb{E}_{\overline{q}\left(\overline{s}\right)}\left[u\left(\frac{\overline{s}}{n},\dots,\frac{\overline{s}}{n},b\right)\right]-\overline{t}\left(\overline{s}\right)\geq \mathbb{E}_{\overline{q}\left(\hat{s}\right)}\left[u\left(\frac{\overline{s}}{n},\dots,\frac{\overline{s}}{n},b\right)\right]-\overline{t}\left(\hat{s}\right) \quad \text{ for all }\overline{s},\hat{s}\in \overline{S} \;\;\text{ and}\\
		\mathbb{E}_{\overline{q}\left(\overline{s}\right)}\left[u\left(\frac{\overline{s}}{n},\dots,\frac{\overline{s}}{n},b\right)\right]-\overline{t}\left(\overline{s}\right)\geq 0 \quad \text{ for all }\overline{s}\in \overline{S}.
	\end{array}
\end{equation}
The constraints are simply the standard incentive compatibility and individual rationality constraints.

Now observe that, for $\overline{s}>0$, the ratio of the value of the grand bundle $N$ to any other bundle $b$ is a constant because
$$\frac{u\left(\frac{\overline{s}}{n},\dots,\frac{\overline{s}}{n},N\right)}{u\left(\frac{\overline{s}}{n},\dots,\frac{\overline{s}}{n},b\right)}=\frac{\overline{s}}{\kappa_b \frac{|b|}{n}\overline{s}}=\frac{n}{\kappa_b |b|}.$$

\medskip

We can then directly apply Proposition 1 in \citet{HaHaRES2020}\footnote{Informally, their result states that a pure bundling mechanism is optimal when types are one-dimensional and the ratio of the value of the grand bundle $N$ to every other bundle is non-increasing.} to conclude that a pure bundling mechanism $\mM^{PB}_{\overline{p}}$ at some price $\overline{p}$ solves \eqref{eq:step2_relaxed}. Because $\overline{G}'=H_{\alpha}$, we can then conclude that the pure bundling mechanism $\mM^{PB}_{\alpha}$ at the price $\alpha$ is a best response of the seller to $ G'$.

\medskip

Now define,
$$\alpha^*:=\min\left\{\alpha\;\big|\; H_{\alpha}\in\mathcal{\overline{G}} \right\}.$$
In words, $\alpha^*$ is the lowest minimum of the support of a truncated Pareto distribution over grand bundle estimates induced by a signal; the existence of this minimum can be established by a simple continuity argument. By \cref{lemma:corr_signal}, there exists a perfectly correlated signal $ G^*$ such that the distribution over grand bundle estimates that it induces satisfies $\overline{G}^*=H_{\alpha^*}$ and, by the above argument, the pure bundling mechanism $\mM^{PB}_{\alpha^*}$ at the price $\alpha^*$ is an optimal mechanism for the seller in response to $ G^*$.

\medskip

By construction, the outcome $( G^*,\mM^{PB}_{\alpha^*})$ satisfies all the properties (1)--(3) of the theorem. Moreover, from Step 1, it is buyer-optimal. If there were another outcome that generated strictly higher consumer surplus, Step 1 implies that we would be able to find a truncated Pareto distribution $H_{\alpha}\in\mathcal{\overline{G}}$ where $\alpha<\alpha^*$ which is a contradiction. This completes the proof of this step and of the theorem.
\end{proof}

\subsection{Comparative statics}\label{sec:comp_statics}

In this section, we apply \cref{thm:buyer_opt} to derive a comparative static relating consumer surplus to the number of goods. We begin with some context. It is clearly beneficial for the monopolist to have the ability to screen over all $n$ goods as opposed to having to set a price for each good individually. This is because maximizing profits over a strictly larger set of mechanisms must achieve a weakly higher profit. However, as \citet{salinger1995} observes, increased profits need not be at the expense of consumer surplus. For instance, consider a buyer with additive values for $n=2$ goods where his value for each good is independently and uniformly distributed on $[0,1]$. Here, the optimal separate sales mechanism is to charge a price of $\frac12$ for each good. Now, suppose that in addition to selling the goods individually, the seller was allowed to pure bundle. She would choose to do the latter, and the optimal pure bundling mechanism is to charge a price of $\sqrt{\frac23}<\frac12+\frac12$ for the grand bundle. The latter mechanism (which exploits the fact that there are multiple goods) leads to \textit{both} higher profits and consumer surplus.

By contrast, \citet{bakos1999} derive a limit result that shows the seller can extract all the surplus from a buyer with additive, iid values when the number of goods $n\to \infty$. They use a law of large numbers to argue that the value of the grand bundle divided by the number of goods $n$ converges, and so the seller can extract all the surplus by just offering a pure bundling mechanism at a price of $n$-times that limit. As we argued in the introduction, because it is hard to characterize the optimal mechanism, we are not aware of any general results for finitely many goods that describe whether the seller's ability to screen across multiple dimensions hurts consumers. The goal of this section is to show that such an analysis is possible for the buyer-optimal outcome.

\medskip

In this section, we restrict attention to the case of additive values and we assume each $\theta_i$ is iid with distribution $\tilde{F}$ that has a positive density (so the joint distribution is $F=\tilde{F}\times\cdots\times \tilde{F}$). We denote $\tilde{\mu}=\mathbb{E}_{\tilde{F}}[\theta_i]$ to be the mean of each dimension of the agent's type. Because we will vary the number of goods we define $\tilde{S}=[\theta_{\ell},\theta_h]$ and use $\tilde{S}^n:=[\theta_{\ell},\theta_h]^n$ (instead of just $S$) to denote the set of possible signal realizations. Additionally, we will use $\overline{S}_n:=[n\theta_{\ell},n\theta_h]$ to denote the set of possible grand bundle posterior estimates.

\cref{thm:buyer_opt} states that there is a buyer-optimal outcome $\left(G^*_{n},\, \mM^{PB}_{\overline{p}^*_n}\right)$ where $G^*_{n}$ is perfectly correlated and $\mM^{PB}_{\overline{p}^*_n}$ is a pure bundling mechanism at price $\overline{p}^*_n$. As we explained in the discussion following \cref{thm:buyer_opt}, when values are additive, the separate sales mechanism $\mM^{Sep}_{p^*_{n}}$ where $p^*_{n}=\left(\frac{\overline{p}^*_n}{n},\dots,\frac{\overline{p}^*_n}{n}\right)$ is also an optimal mechanism for the seller in response to $G^*_{n}$.

\medskip

Finally, we define
$$\text{CS}_n:= \tilde{\mu}-\frac{\overline{p}^*_n}{n},$$
to be the \textit{average consumer surplus}. This is the total consumer surplus from the buyer-optimal outcome divided by the number of goods $n$. The following result, the proof of which is in \cref{app:proofs}, shows that this average consumer surplus decreases in the number of goods.

\medskip

\begin{theorem}\label{thm:comp_statics}
	Suppose valuations are additive and each dimension of the type is iid. Then the average consumer surplus in the buyer-optimal outcome is decreasing in the number of goods, 
	$$CS_n \geq CS_{n+1}\quad \text{ for all } n\geq 1.$$
	Moreover, as $n$ grows large, the average consumer surplus in the buyer-optimal outcome converges to zero ($\lim_{n\to \infty} \text{CS}_n=0$) and the seller extracts all the surplus.  
\end{theorem}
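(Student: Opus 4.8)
The plan is to reduce the entire statement to a one-dimensional ``normalized'' problem and then exploit the fact that sample means shrink in the convex order as $n$ grows.

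First I would record that, by the proof of \cref{thm:buyer_opt}, the bundle price in the buyer-optimal outcome equals the seller's profit, $\overline{p}^*_n=\alpha^*_n$, where $\alpha^*_n=\min\{\alpha\mid H_\alpha\in\overline{\mathcal{G}}\}$ is the smallest lower support of a truncated Pareto distribution of which $\overline{F}$ (the law of $\overline{\theta}=\theta_1+\cdots+\theta_n$) is a mean-preserving spread. Since truncated Pareto distributions are scale invariant, $H_{\alpha,\beta}$ rescaled by $1/n$ is exactly $H_{\alpha/n,\beta/n}$, and the relation $\succsim$ is preserved by the linear map $x\mapsto x/n$. Writing $\hat{F}_n$ for the law of the sample mean $\overline{\theta}/n$ on $[\theta_\ell,\theta_h]$ (which has mean $\tilde\mu$ for every $n$), I would conclude that
$$a^*_n:=\frac{\alpha^*_n}{n}=\min\{a\mid \hat{F}_n\succsim H_a\},$$
where $H_a$ now denotes the truncated Pareto distribution with lower support $a$ calibrated to have mean $\tilde\mu$. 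Hence $CS_n=\tilde\mu-a^*_n$, and the whole theorem reduces to showing that $a^*_n$ is increasing in $n$ with $a^*_n\to\tilde\mu$.

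For monotonicity the key observations are two. First, the map $\hat{F}\mapsto a^*(\hat{F}):=\min\{a\mid \hat{F}\succsim H_a\}$ is order-reversing for $\succsim$: if $\hat{F}\succsim\hat{F}'$ then, by transitivity of the convex order, every $H_a$ with $\hat{F}'\succsim H_a$ also satisfies $\hat{F}\succsim H_a$, so $a^*(\hat{F})\le a^*(\hat{F}')$. Second, sample means decrease in the convex order, i.e.\ $\hat{F}_n\succsim\hat{F}_{n+1}$. For this I would use the iid (hence exchangeable) structure to write $\overline{\theta}/(n+1)=\frac{1}{n+1}\sum_{i=1}^{n+1}Y^{(i)}$, where $Y^{(i)}$ is the leave-one-out average of the remaining $n$ draws; each $Y^{(i)}$ is distributed as $\hat{F}_n$, so Jensen's inequality applied to every convex test function yields $\hat{F}_n\succsim\hat{F}_{n+1}$. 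Combining the two observations gives $a^*_n\le a^*_{n+1}$, hence $CS_n\ge CS_{n+1}$.

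For the limit, $CS_n\ge 0$ forces $a^*_n\le\tilde\mu$, so the increasing sequence $a^*_n$ converges to some $a^*\le\tilde\mu$. Since $\hat{F}_n\succsim H_{a^*_n}$ and variance is the expectation of the convex function $x\mapsto(x-\tilde\mu)^2$, we have $\mathrm{Var}(H_{a^*_n})\le\mathrm{Var}(\hat{F}_n)=\sigma^2/n\to 0$, where $\sigma^2=\mathrm{Var}_{\tilde{F}}(\theta_i)$. A direct computation gives $\mathrm{Var}(H_a)=2a\,\beta(a)-a^2-\tilde\mu^2$, which is continuous in $a$ (as $\beta(\cdot)$ is continuous) and vanishes only at $a=\tilde\mu$ (where $\beta(a)=a=\tilde\mu$). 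By compactness of $[\theta_\ell,\tilde\mu]$, $\mathrm{Var}(H_{a^*_n})\to 0$ then forces $a^*_n\to\tilde\mu$, so $CS_n\to 0$ and the seller extracts the full surplus. The main obstacle is the convex-order comparison of sample means; the rest is bookkeeping (the scaling reduction, transitivity, and the variance formula). The one point requiring care is that $\succsim$ is used throughout between distributions sharing the common mean $\tilde\mu$ — which is exactly what legitimizes both the transitivity argument and the variance inequality — together with existence of the minimum defining $a^*_n$, which follows from the same continuity argument invoked in the proof of \cref{thm:buyer_opt}.
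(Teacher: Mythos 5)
Your proof is correct and follows essentially the same route as the paper's: rescale to the space of sample means, get monotonicity from $\tilde{F}_n\succsim \tilde{F}_{n+1}$ combined with transitivity of $\succsim$ and minimality of the truncated-Pareto lower support, and get the limit from concentration of the sample mean at $\tilde{\mu}$. The only differences are matters of detail: you prove the convex-order comparison of sample means (via the leave-one-out Jensen argument) where the paper cites it as a well-known property, and you establish the limit through the explicit variance formula $\mathrm{Var}(H_a)=2a\beta(a)-a^2-\tilde{\mu}^2$ and compactness rather than by invoking the weak law of large numbers and convergence in distribution.
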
 

This result (proved in the appendix) is a straightforward consequence of \cref{thm:buyer_opt}. For intuition, consider the buyer-optimal outcome $\left(G^*_{n+1}, \mM^{Sep}_{p^*_{n+1}}\right)$ for the $n+1$ goods in which $G^*_{n+1}$ is perfectly correlated and the seller chooses a separate sales mechanism. Let $\widetilde{G}^*_{n}$ be the marginal distribution of $G^*_{n+1}$ over the first $n$ dimensions $(s_1,\dots,s_n)$. Note that $\widetilde{G}^*_{n}$ is a perfectly correlated signal for the type space $[\theta_\ell,\theta_h]^n$ distributed by $\tilde{F}\times\cdots \times \tilde{F}$. For the perfectly correlated signal $\widetilde{G}^*_{n}$, the same separate sales mechanism $\mM^{Sep}_{p^*_{n+1}}$ at prices $p^*_{n+1}$ is an optimal mechanism for the seller implying that buyer-optimal outcome must generate at least $CS_{n+1}$ in average consumer surplus.

\medskip

This result highlights the nuanced interplay between information and screening. In an additive values environment, there is always a buyer-optimal outcome in which the seller's optimal mechanism is separate sales. Thus, when we increase the number of goods, the seller still finds it optimal to sell each good separately or, in other words, does not strictly benefit from bundling together different goods. But yet the average consumer surplus decreases. This is because the information that the buyer receives changes as the number of goods increases. In order to prevent the seller from multi-dimensional screening, the signal in the buyer-optimal outcome must introduce correlation (by injecting noise) and does so by only providing information to the buyer about his value for the grand bundle. As the number of goods increase, such correlation surrenders more surplus to the seller until, in the limit, she can extract all the surplus.

\section{The optimal informationally robust mechanism}\label{sec:robust}

The buyer-optimal outcome characterizes the informational environment that is most advantageous for the buyer. Here, the timing is such that an information designer chooses the signal first, and then the seller best responds. It is equally natural to think of the alternative timing: the seller first picks her mechanism, and then nature chooses the signal in response. This has at least two interpretations. The first captures a seller who does not know the exact type distribution and is worried about model misspecification. The second interpretation is that the buyer acquires information after observing the mechanism, but the seller does not know the buyer's information acquisition technology. The seller evaluates each mechanism based on the worst-case profits taken with respect to all possible signal realizations and buyer best-responses.

\medskip

We begin with a few definitions.

\medskip

\noindent \textbf{Revenue Guarantee:} We say that a mechanism $\mM$ provides a \textit{revenue guarantee} of $\pi$ if
$$\Pi( G,\mM,\sigma)\geq \pi$$
for all signals $ G\in\mathcal{G}$ and buyer best responses $\sigma\in \Sigma(\mM)$.

\medskip

\noindent \textbf{Informationally Robust Mechanism:} Formally, we define the \textit{optimal informationally robust mechanism} $\umM^{\star}$ as the mechanism that solves
$$\umM^{\star}\in \argmax_{\mM}\; \inf_{ G\in \mathcal{G},\sigma\in\Sigma(\mM)}  \Pi( G,\mM,\sigma).$$
This is the mechanism we aim to characterize. In words, it provides the seller the highest revenue guarantee against all possible signals and best responses by the buyer. Note the difference with the (standard) definition of an optimal mechanism in \cref{sec:buyer_opt_out} where the buyer is assumed to break indifference in favor of the seller. As with the buyer-optimal outcome, we will explicitly construct the mechanism which will show that the maximum is obtained. 

\medskip

In fact, we will show that it takes the following simple form.

\medskip

\noindent \textbf{Random Pure Bundling Mechanism:} A \textit{random pure bundling mechanism} $\mM^{rPB}=(M^{rPB},q^{rPB},t^{rPB})$ has a message space $M^{rPB}=\overline{S}$ given by the set of possible posterior grand bundle estimates and an allocation rule 
\begin{equation}\label{eq:random_pb}
	q^{rPB}(m,b)=0 \text{ if } b\neq N \;\; \text{ and }\;\; q^{rPB}(m,N)+q^{rPB}(m,\emptyset)=1 \tag{rPB}
\end{equation}
for all $m\in M^{rPB}.$ In words, these are mechanisms in which the buyer is only ever allocated the grand bundle; however this allocation could be random. Put differently, the buyer is effectively offered a menu of prices and probabilities where each menu item corresponds to the buyer paying a price in exchange for receiving the grand bundle with the given probability.

\medskip

Before characterizing the optimal informationally robust mechanism, it is worth relating it to the buyer-optimal outcome. Observe that we must have 
\begin{equation}\label{eq:maxminmax}
	\max_{\mM}\; \inf_{ G\in \mathcal{G},\sigma\in\Sigma(\mM)}  \Pi( G,\mM,\sigma)\leq \max_{\mM}\; \inf_{\sigma\in\Sigma(\mM)}  \Pi( G^*,\mM,\sigma)=\pi^*=\min_{ G\in \mathcal{G}} \max_{\mM,\sigma\in\Sigma(\mM)}  \Pi( G,\mM,\sigma)
\end{equation}
where, recall that, $ G^*,\pi^*$ are respectively the signal we construct and the profit in the buyer-optimal outcome (\cref{thm:buyer_opt}). The first equality follows from the observation that $\mM^{PB}_{\pi^*}$ is an optimal mechanism in response to the signal $ G^*$ and that, for this signal and mechanism, all buyer best responses result in the same profit. In words, this shows that the seller's worst case profit from the optimal informationally robust mechanism must be lower than that from the buyer-optimal outcome. 

In the next result, we show that these are actually equal and, more importantly, we use this fact to characterize the optimal informationally robust mechanism. Note that we cannot immediately employ a minimax theorem to show the equality because the infimum on the left and the maximum on the right in \eqref{eq:maxminmax} are taken over both signals and buyer best responses. This fact is the point of departure for \citet{brooks2020strong} who study the relation between the max-min and min-max problems in general multi-agent environments (we discuss their paper below).

\begin{theorem}\label{thm:robust}
	There is a random pure bundling mechanism that is an optimal informationally robust mechanism. This mechanism provides a revenue guarantee of $\pi^*$, the seller's revenue in the buyer-optimal outcome.
\end{theorem}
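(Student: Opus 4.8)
The plan is to exhibit one particular random pure bundling mechanism whose revenue guarantee is exactly $\pi^*$; since \eqref{eq:maxminmax} already establishes that \emph{no} mechanism can guarantee more than $\pi^*$, any such mechanism is automatically an optimal informationally robust mechanism. The key idea is that a random pure bundling mechanism collapses the seller's problem to its one-dimensional counterpart, which lets me import the optimal random posted price of \citet{DuEcma2018} and lift it back to the multi-dimensional setting.

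The first step is the reduction. Fix any mechanism $\mM^{rPB}$ satisfying \eqref{eq:random_pb}. Because $q^{rPB}(m,b)=0$ for every $b\neq N$, the buyer's payoff from reporting $m$ at posterior estimate $s$ is $q^{rPB}(m,N)\,\overline{s}-t^{rPB}(m)$, which depends on $s$ only through the grand bundle estimate $\overline{s}=s_1+\cdots+s_n$. Hence, for any signal $G\in\mathcal{G}$ inducing the grand bundle distribution $\overline{G}$, both the set of buyer best responses and the resulting revenue depend on $G$ only through $\overline{G}$. Using part (i) of \cref{lemma:corr_signal} to identify the feasible grand bundle distributions with the mean-preserving contractions of $\overline{F}$, I obtain
\[
\inf_{G\in\mathcal{G},\,\sigma\in\Sigma(\mM^{rPB})}\Pi(G,\mM^{rPB},\sigma)=\inf_{\overline{G}\,:\,\overline{F}\succsim\overline{G}}\ \inf_{\overline{\sigma}}\ \overline{\Pi}(\overline{G},\overline{\mM},\overline{\sigma}),
\]
where $\overline{\mM}$ is the induced one-dimensional menu of prices and grand bundle probabilities and $\overline{\sigma}$ ranges over one-dimensional buyer best responses.

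The second step invokes the one-dimensional result. The right-hand side above is precisely the objective of the informationally robust pricing problem of \citet{DuEcma2018} with a single object whose value $\overline{\theta}$ has prior $\overline{F}$: nature picks a mean-preserving contraction of $\overline{F}$, the buyer breaks indifferences adversarially, and the seller posts a possibly random price. Du shows that the optimal random posted price guarantees a revenue equal to the one-dimensional buyer-optimal (\citealp{RoSzAER2017}) profit, namely the minimum of the support of the optimal truncated Pareto distribution. Because the feasible sets $\mathcal{H}$ and $\overline{\mathcal{G}}$ coincide by \cref{lemma:corr_signal}, this quantity is exactly $\pi^*=\alpha^*$ from Step 2 of the proof of \cref{thm:buyer_opt}. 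Interpreting Du's random posted price as a random pure bundling mechanism therefore yields an $\mM^{rPB}$ with revenue guarantee $\pi^*$, which combined with the upper bound in \eqref{eq:maxminmax} shows it attains the max-min value and is optimal.

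The main obstacle is the reduction itself: I must verify that the \emph{adversarial} tie-breaking over buyer best responses in the multi-dimensional problem translates exactly into adversarial tie-breaking in the induced one-dimensional problem, so that Du's worst-case guarantee (stated for adversarial buyer behavior) transfers without loss. Since under any random pure bundling mechanism the buyer's choice is measurable in $\overline{s}$ alone, every multi-dimensional best response projects onto a one-dimensional best response and conversely, so this alignment holds; but confirming that the two infima genuinely agree — rather than one merely bounding the other — is the step requiring the most care relative to the one-dimensional benchmark.
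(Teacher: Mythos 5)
Your proposal is correct and takes essentially the same route as the paper: the paper's proof likewise rests on the observation that a random pure bundling mechanism's payoffs depend on $s$ only through $\overline{s}$, uses part (i) of \cref{lemma:corr_signal} to identify the feasible grand-bundle distributions with the mean-preserving contractions of $\overline{F}$, and then runs \citet{DuEcma2018}'s single-good argument, the only difference being that the paper reproduces Du's construction explicitly (the logarithmic price distribution on $[\pi^*,\overline{s}^*]$ and the concavity/binding-constraint verification of the guarantee) rather than citing his result as a black box. Your closing check that adversarial tie-breaking projects correctly between the multi-dimensional and one-dimensional problems is exactly the point the paper handles implicitly, and both arguments conclude optimality from the upper bound \eqref{eq:maxminmax}.
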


\begin{proof}
	The proof adapts the argument of the proof of Proposition 1 in \citet{DuEcma2018}. We provide it here for completeness and flag the main step where we reduce our multidimensional problem to the single good case that he analyzes.

	We begin by stating a property of the signal  $ G^*$ from the buyer-optimal outcome we derived in the proof of \cref{thm:buyer_opt}. Recall that this signal is perfectly correlated and induces a truncated Pareto distribution $\overline{G}^*=H_{\pi^*}$ on grand bundle estimates. It is possible to show\footnote{See the proof of Proposition 1 in \citet{DuEcma2018} or Lemma 3 in \citet{RaRoSzWP2020}.} that there must be a $\os^*\in (\pi^*,\beta(\pi^*))$ in the interior of the support of $H_{\pi^*}$ such that
	\begin{equation}\label{eq:sosd_bind}
		\int_{n\theta_{\ell}}^{\os^*} \overline{F}(x) \td x=\int_{n\theta_{\ell}}^{\os^*} H_{\pi^*}(x) \td x.
	\end{equation}
	For some intuition, it is straightforward to show that 
	the function $I_\pi(y):=\int_{n\theta_{\ell}}^y H_\pi(x) \td x$ is continuous and decreasing in $\pi$ (for a given $y$). In other words, whenever $\pi<\pi'$, then $I_{\pi}(y)\geq I_{\pi'}(y)$ for all $y\in [n\theta_{\ell}, n\theta_h]$ with strict inequality on $(\pi, \beta(\pi))$. Moreover, 
	recall that $H_{\pi^*}\in\mathcal{\overline{G}}$ being a distribution over posterior grand bundle estimates (induced by a signal) implies that $\int_{n\theta_{\ell}}^{\os} \overline{F}(x) \td x-\int_{n\theta_{\ell}}^{\os} H_{\pi^*}(x) \td x \geq 0$ for all $\os\in (n\theta_{\ell},n\theta_h)$ or that $\overline{F}$ is a mean-preserving spread of $H_{\pi^*}$. If this inequality was always slack, it would be possible to construct another truncated Pareto distribution $H_{\pi^*-\varepsilon}\in\mathcal{\overline{G}}$ for sufficiently small $\varepsilon>0$ which would violate the optimality of $H_{\pi^*}$.
	
	\medskip
	
	Equation \eqref{eq:sosd_bind} has two immediate consequences. First, because $\int_{n\theta_{\ell}}^{\os} \overline{F}( x) \td x-\int_{n\theta_{\ell}}^{\os} H_{\pi^*}( x) \td x\geq 0$ for all $\os\in (n\theta_{\ell},n\theta_h)$, this function must be minimized at $\os=\os^*.$ Hence, from   the first-order condition, we obtain
	\begin{equation}\label{eq:cdfs}
		\overline{F}(\os^*) =H_{\pi^*}(\os^*).
	\end{equation}
	 Second, doing integration by parts on both sides of \eqref{eq:sosd_bind} and using \eqref{eq:cdfs}, we get
	\begin{equation}\label{eq:exp_value}
		\int_{n\theta_{\ell}}^{\os^*}  x\td\overline{F}( x)=\int_{n\theta_{\ell}}^{\os^*}  x\td H_{\pi^*}( x).
	\end{equation}
	
	\medskip
	
	We now use the revenue $\pi^*$ from the buyer optimal outcome and the value $\os^*$ defined in \eqref{eq:sosd_bind} to construct a random pure bundling mechanism. We first provide an intuitive implementation before formally describing the allocation and transfer. So suppose the seller only offers the grand bundle for sale but randomizes over the price. Specifically, suppose the price is randomly drawn from the interval $\overline{p}\in[\pi^*,\overline{s}^*]$ with the cumulative distribution $\mathcal{P}$ given by
	$$\mathcal{P}(\overline{p})=\left\{\begin{array}{cl} 1 & \text{ if } \overline{p}>\overline{s}^*, \\ \left[\log\left(\frac{\overline{s}^*}{\pi^*}\right)\right]^{-1} \log\left(\frac{\overline{p}}{\pi^*}\right) & \text{ if } \pi^*\leq \overline{p}\leq \overline{s}^*, \\ 0 & \text{ if } \overline{p}<\pi^*. \end{array}\right.$$

	Clearly, a type $s\in S$ purchases the grand bundle when the realized price $\overline{p}\leq \overline{s}$; since the distribution of prices has no atoms, the seller's revenue is unaffected by the decision that the buyer makes at the zero probability event $\overline{p}= \overline{s}$. Therefore, the allocation and transfer for this mechanism can be written in the form of a random pure bundling mechanism $\mM^{rPB}$ with
	\begin{equation}\label{eq:rpb_mech}
		\begin{split}
			& q^{rPB}(s,b)=0 \quad \text{if }b\neq N,\\
			& q^{rPB}(s,N)=\mathcal{P}(\os)=\left\{\begin{array}{cl} \left[\log\left(\frac{\overline{s}^*}{\pi^*}\right)\right]^{-1} \min\left\{ \log\left(\frac{\os}{\pi^*}\right)\,,\, \log\left(\frac{\os^*}{\pi^*}\right) \right\} &  \text{ if } \os > \pi^*, \\ 0 &  \text{ if } \overline{s} \leq \pi^*, \end{array}\right.  \\
			& t^{rPB}(s)=\int_{n\theta_{\ell}}^{\os} \overline{p} \td \mathcal{P} (\overline{p})= \left\{\begin{array}{cl} \left[\log\left(\frac{\overline{s}^*}{\pi^*}\right)\right]^{-1} \min\left\{ \os-\pi^*\,,\, \overline{s}^*-\pi^*\right\} &  \text{ if } \os > \pi^*, \\ 0 &  \text{ if } \os \leq \pi^*. \end{array}\right.
		\end{split}
	\end{equation}
	Note that faced with this mechanism, truth-telling or $\sigma(s)=s$ for all $s\in S$ is a best response, and all best responses give the seller the same revenue.
	
	\medskip
	
	Now observe that, for any signal $ G\in\mathcal{G}$, the seller's profit from the mechanism $\mM^{rPB}$ satisfies
	\begin{align}
		\int_\Theta t^{rPB}(s) \td  G(s) & =\left[\log\left(\frac{\overline{s}^*}{\pi^*}\right)\right]^{-1}\int_{\pi^*}^{n\theta_h} \left(\min\left\{ \os-\pi^*\,,\, \overline{s}^*-\pi^*\right\}\right) \td\overline{G}(\os) \nonumber \\
		& \geq \left[\log\left(\frac{\overline{s}^*}{\pi^*}\right)\right]^{-1}\int_{n\theta_{\ell}}^{n\theta_h} \left(\min\left\{ \os-\pi^*\,,\, \overline{s}^*-\pi^*\right\}\right) \td\overline{G}(\os) \nonumber \\
		& \geq \left[\log\left(\frac{\overline{s}^*}{\pi^*}\right)\right]^{-1}\int_{n\theta_{\ell}}^{n\theta_h} \left(\min\left\{ \os-\pi^*\,,\, \overline{s}^*-\pi^*\right\}\right) \td\overline{F}(\os). \label{eq:lb_exp}
	\end{align}
	The first inequality follows from the fact that $\min\left\{ \os-\pi^*\,,\, \overline{s}^*-\pi^*\right\}<0$ for $\os< \pi^*$. The second inequality is a consequence of two facts. First, $G$ is a signal and so (from part (i) of \cref{lemma:corr_signal}) $\overline{F}$ is a mean-preserving spread of the distribution $\overline{G}$ (induced by $G$ over grand bundle estimates). Second, the function being integrated is concave and hence the inequality follows from the fact that $\overline{F}\succsim \overline{G}$.

	Observe that \eqref{eq:lb_exp} implies 
	$$\inf_{ G\in \mathcal{G},\sigma\in\Sigma(\mM^{rPB})} \Pi( G,\mM^{rPB},\sigma) \geq \left[\log\left(\frac{\overline{s}^*}{\pi^*}\right)\right]^{-1}\int_{n\theta_{\ell}}^{n\theta_h} \left(\min\left\{ \os-\pi^*\,,\, \overline{s}^*-\pi^*\right\}\right) \td\overline{F}(\os)$$
	or, in words, that mechanism $\mM^{rPB}$ has a revenue guarantee given by the right side. This is the key insight that allows us to employ \citet{DuEcma2018}'s argument; it shows that it is possible to generate this revenue guarantee in our multidimensional environment by effectively using a one dimensional mechanism.

	\medskip
	
	Recall that $ G^*$ induces a truncated Pareto distribution $\overline{G}^*=H_{\pi^*}$ over grand bundle estimates and so the seller's worst-case profit from $\mM^{rPB}$ when the signal is $ G^*$ is equal to
	\begin{align*}
		\inf_{\sigma\in\Sigma(\mM^{rPB})} \Pi( G^*,\mM^{rPB},\sigma) =&\left[\log\left(\frac{\overline{s}^*}{\pi^*}\right)\right]^{-1}\int_{\pi^*}^{n\theta_h} \left(\min\left\{ \os-\pi^*\,,\, \overline{s}^*-\pi^*\right\}\right)  \td H_{\pi^*}(\os)\\
		=&\left[\log\left(\frac{\overline{s}^*}{\pi^*}\right)\right]^{-1} \int_{n\theta_{\ell}}^{\os^*} (\os-\pi^*) \td H_{\pi^*}(\os)+\left[\log\left(\frac{\overline{s}^*}{\pi^*}\right)\right]^{-1}(1-H_{\pi^*}(\os^*))(\overline{s}^*-\pi^*)\\
		=&\left[\log\left(\frac{\overline{s}^*}{\pi^*}\right)\right]^{-1} \int_{n\theta_{\ell}}^{n\theta_h} \left(\min\left\{ \os-\pi^*\,,\, \overline{s}^*-\pi^*\right\}\right) \td\overline{F}(\os),
	\end{align*}
	where the last equality follows from \eqref{eq:cdfs} and \eqref{eq:exp_value}. In other words, for the signal $ G^*$ from the buyer optimal outcome, the seller's profit precisely equals the revenue guarantee \eqref{eq:lb_exp}.
	
	Finally, note that the seller's revenue also satisfies
	\begin{align*}
		\inf_{\sigma\in\Sigma(\mM^{rPB})} \Pi( G^*,\mM^{rPB},\sigma) = & \left[\log\left(\frac{\overline{s}^*}{\pi^*}\right)\right]^{-1}\int_{\pi^*}^{n\theta_h} \left(\min\left\{ \os-\pi^*\,,\, \overline{s}^*-\pi^*\right\}\right)  \td H_{\pi^*}(\os)\\
		=&\left[\log\left(\frac{\overline{s}^*}{\pi^*}\right)\right]^{-1} \int_{\pi^*}^{\overline{s}^*} (\os-\pi^*)\frac{\pi^*}{\os^2} \td \os+ \left[\log\left(\frac{\overline{s}^*}{\pi^*}\right)\right]^{-1} (\overline{s}^*-\pi^*)\frac{\pi^*}{\os^*} \\
		=&\pi^*.
	\end{align*}

	But inequality \eqref{eq:maxminmax} shows that $\pi^*$ is the highest possible revenue guarantee a mechanism can provide and this implies that 
	$$\max_{\mM}\; \inf_{ G\in \mathcal{G},\sigma\in\Sigma(\mM)}  \Pi( G,\mM,\sigma)=\inf_{ G\in \mathcal{G},\sigma\in\Sigma(\mM^{rPB})} \Pi( G,\mM^{rPB},\sigma)= \inf_{\sigma\in\Sigma(\mM^{rPB})}  \Pi( G^*,\mM^{rPB},\sigma)=\pi^*$$
	and so $\mM^{rPB}$ is an optimal informationally robust mechanism and this completes the proof.
\end{proof}

\cref{thm:robust} provides one possible motivation for the ubiquity of pure bundling (especially for digital goods): this mechanism guarantees the best possible profit evaluated against model misspecification. Note that, unlike \cref{thm:buyer_opt}, even with additive values, separate sales cannot immediately be employed here as an alternative optimal informationally robust mechanism. To see this, suppose the seller offered a separate sales mechanism with a random but perfectly correlated price vector. Observe that there can be two separate signals $ G$ and $ G'$ that induce the same distribution over grand bundle estimates, but that generate different profits given this separate sales mechanism. Thus, we can no longer employ the simple argument that we did to generate the revenue guarantee given by \eqref{eq:lb_exp}. Loosely speaking, random pure bundling mechanisms have an advantage in providing higher revenue guarantees because they reduce the buyer's private information to a single dimension and, in doing so, the worst-case signal is effectively drawn from a smaller set.

\medskip

Finally, as we have already mentioned, the fact that the max-min and min-max problems have the same solution for the sale of a single good was first observed by \citet{DuEcma2018}. In independent and contemporaneous work, \citet{brooks2020strong} generalize this insight to a variety of different settings (including multiple goods with additive values) with multiple buyers who have interdependent values. They consider finite type spaces, so the fact that this equivalence arises in our model is not per se implied by any of their results. More substantively, however, the aims of our respective papers are different. Our goal is to derive qualitative properties of the seller's optimal mechanism in two different information environments; the fact that the seller gets the same revenue in both is a fact we use to prove \cref{thm:robust} not a main focus of this paper. Moreover, we show (in \cref{cor:seller_min}) that the seller's profit in every buyer-optimal outcome equals the value of the objective from the solution to the min-max problem (this equivalence is not a priori apparent). By contrast, \citet{brooks2020strong} aim to show the equivalence of the min-max and max-min problems very generally, but they do not derive the seller's optimal mechanism in either; instead, their results are meant to provide a means for efficient numerical simulation. In the multiple-agent environment they consider, this equivalence is harder to establish than in our single agent setting due to the possibility of multiple equilibria.

\section{Concluding remarks}\label{sec:conclusion}

In this paper, we study a general multidimensional screening problem for a seller in two different information environments with buyer learning. In the first, we derive the optimal mechanism under the information structure that maximizes consumer surplus. In the second, we derive the optimal informationally robust mechanism which provides the highest revenue guarantee for the seller against all possible information structures. We show that pure bundling emerges in both and that the seller's profit is the same in both problems. Our main theoretical insight is that the introduction of buyer learning allows us to reduce the seller's problem to a one-dimensional counterpart which, in turn, considerably simplifies this typically intractable problem.

We end with a brief discussion of the key assumption driving our results: the prior type distribution $F$ is exchangeable. As we discussed, this is a considerably more general environment than what is typically analyzed in the multidimensional screening literature, but it is worth pointing out that our proof strategy does not extend to general distributions.  This can happen even if $F$ is such that each $\theta_i$ has   identical marginal distributions; for instance, when the buyer's values for two goods are positive correlated but negatively correlated for a different pair.  Our proof of \cref{lemma:corr_signal}, which uses the exchangeability of the prior, does not generalize to this case.   Therefore, \cref{thm:buyer_opt} cannot be directly generalized, since it builds on this result that any distribution over posterior grand bundle estimates (induced by a signal) can be induced by a perfectly correlated signal.  
We view this richness to be yet another interesting feature of this multi-dimensional environment and, in future work, we hope to generalize our results to such asymmetric environments.

\newpage

\appendix

\section{Missing proofs from the text}\label{app:proofs}

\begin{proof}[Proof of \cref{cor:seller_min}]

Suppose, for contradiction, that there is an outcome $\left( G,\ (q,t)\right)$ in which the seller's profit $\pi$ is strictly lower than the profit $\pi^*$ from a buyer-optimal outcome $\left( G^*,\ (q^*,t^*)\right)$. Steps 1 and 2 in the proof of \cref{thm:buyer_opt} show that it is possible to find an outcome $\left( G',\ (q',t')\right)$ in which trade is efficient and the seller's profit from her optimal mechanism is exactly $\pi$. But this is a contradiction because the consumer surplus in outcome $\left( G',\ (q',t')\right)$ will be $\overline{\mu}-\pi>\overline{\mu}-\pi^*$ contradicting the optimality of $\left( G^*,\ (q^*,t^*)\right)$.

\end{proof}

\medskip

\begin{proof}[Proof of \cref{thm:comp_statics}]
	Denote the set of signals when there are $n$ goods by $\mathcal{G}_n$ and the set of distributions on grand bundle estimates that they induce by $\overline{\mathcal{G}}_n$. Let $\tilde{F}_n$ denote the distribution of the average value $\frac{\theta_1+\cdots+\theta_n}{n}\in [0,1]$. Then observe that any Pareto distributed grand bundle signal $H_{\alpha_n}\in \overline{\mathcal{G}}_{n}$ will satisfy $\tilde{F}_n\succsim H_{\frac{\alpha_n}{n},\frac{\beta(\alpha_n)}{n}}$ or, in words, that $H_{\frac{\ua_n}{n},\frac{\oa_n}{n}}$ is a \textit{signal for the average value of $n$ goods}. The converse is also true and $\tilde{F}_n\succsim H_{\frac{\alpha_n}{n},\frac{\beta(\alpha_n)}{n}}$ implies $H_{\alpha_n}\in \mathcal{\overline{G}}_{n}$. Therefore, if we have a signal for the average value, there will be a corresponding grand bundle signal which implies we can use \cref{lemma:corr_signal} to find a perfectly correlated signal for the type vector.
	
	Additionally, note that $\tilde{F}_n\succsim \tilde{F}_{n+1}$. Clearly $\tilde{F}_n$ and $\tilde{F}_{n+1}$ have the same mean $\tilde{\mu}$ and the former will have a higher variance because each $\theta_i$ is iid with distribution $\tilde{F}$. The fact that $\tilde{F}_n$ also second order stochastically dominates $\tilde{F}_{n+1}$ is a well known property of the distribution of sample means.
	
	From the proof of \cref{thm:buyer_opt}, we know that there are buyer-optimal outcomes $\left(G^*_{n},\, \mM^{Sep}_{p^*_{n}}\right)$ and $\left(G^*_{n+1},\, \mM^{Sep}_{p^*_{n+1}}\right)$ for $n$ and $n+1$ goods respectively where $G^*_{n}$ and $G^*_{n+1}$ are perfectly correlated and induce truncated Pareto distributed grand bundle estimates $H_{\overline{p}^*_{n}}\in \mathcal{\overline{G}}_{n}$ and $H_{\overline{p}^*_{n+1}}\in \mathcal{\overline{G}}_{n+1}$.
	
	We have argued above that
	\begin{equation}\label{eq:avg}
		\tilde{F}_n\succsim \tilde{F}_{n+1}\succsim H_{\frac{\overline{p}^*_{n+1}}{n+1},\frac{\beta(\overline{p}^*_{n+1})}{n+1}}.
	\end{equation}	
	In words, this implies that, in addition to being a signal for the average value of $n+1$ goods, $H_{\frac{\overline{p}^*_{n+1}}{n+1},\frac{\beta(\overline{p}^*_{n+1})}{n+1}}$ is also a signal for the average value of $n$ goods. The optimality of $\overline{p}^*_n$ then implies that $\frac{\overline{p}^*_{n+1}}{n+1}\geq \frac{\overline{p}^*_{n}}{n}$ or, equivalently, that $CS_n \geq CS_{n+1}$.
	
	\medskip
	
	Finally, as $n\to \infty$,  the weak law of large numbers implies that $\tilde{F}_n\overset{d}{\to} \delta_{\tilde{\mu}}$ where $\overset{d}{\to}$ denotes convergence in distributions and $\delta_{\tilde{\mu}}$ is the Dirac measure that assigns mass 1 to the point $\tilde{\mu}$. Therefore, we will also have $H_{\frac{\overline{p}^*_{n}}{n},\frac{\beta(\overline{p}^*_{n})}{n}}\overset{d}{\to} \delta_{\tilde{\mu}}$ and so,	in the limit, $\frac{\overline{p}^*_{n}}{n}\to \tilde{\mu}$ and the seller extracts all the surplus.
\end{proof}

\newpage

\bibliographystyle{econometrica}
\bibliography{literature}

\end{document}